\def\nofig{0}
\newcommand{\tvd}{\mathrm{TV}}
\newcommand{\Set}[1]{\mathcal{#1}} 
\renewcommand{\epsilon}{\varepsilon}
\newcommand{\advg}{\mathrm{adv}}
\newcommand{\LLR}{\operatorname{LLR}}
\newcommand{\sLLR}{\operatorname{sLLR}}
\newcommand{\CLLR}{\operatorname{cLLR}}
\newcommand{\sCLLR}{\operatorname{scLLR}}
\newcommand{\NCLLR}{\operatorname{ncLLR}}
\newcommand{\SC}{\mathit{SC}}
\newcommand{\Lap}{\mathrm{Lap}}
\newcommand{\dx}[1][x]{\, d#1}
\title{The Structure of Optimal Private Tests \\ for Simple Hypotheses}
\author{
	\mbox{Cl{\'e}ment L. Canonne\thanks{Department of Computer Science, Stanford University. Supported by a Motwani Postdoctoral Fellowship. \url{ccanonne@cs.stanford.edu}} \hspace{10pt}}
\and \mbox{Gautam Kamath\thanks{Simons Institute for the Theory of Computing. Supported as a Microsoft Research Fellow, as part of the Simons-Berkeley Research Fellowship program. \url{g@csail.mit.edu}} \hspace{10pt}}
\and \mbox{Audra McMillan\thanks{Department of Computer Science,
    Boston University and College of Computer and Information Science,
    Northeastern University. Research supported by NSF award
    AF-1763786, a Sloan Foundation Research Award, and a postdoctoral fellowship from BU's Hariri Institute
    for Computing.   \url{audram@bu.edu}} \hspace{10pt}}
\and \mbox{Adam Smith\thanks{Department of Computer Science, Boston University. Research supported by NSF awards IIS-1447700 and AF-1763786 and a Sloan Foundation Research Award.   \url{ads22@bu.edu}} \hspace{10pt}}
\and \mbox{Jonathan Ullman\thanks{College of Computer and Information Science, Northeastern University.  Supported by NSF grants CCF-1718088, CCF-1750640, and CNS-1816028, and a Google Faculty Research Award. \url{jullman@ccs.neu.edu}} \hspace{10pt}}
}
\begin{document}

\date{}

\pagenumbering{gobble}
\maketitle

\vspace*{-1cm}
\begin{center}
\emph{In memory of Stephen E.\ Fienberg (1942--2016)}
\end{center}

\begin{abstract}
Hypothesis testing plays a central role in statistical inference, and
is used in many settings where privacy concerns are paramount.  This
work answers a basic question about privately testing simple
hypotheses: given two distributions $P$ and $Q$, and a privacy level
$\eps$, how many i.i.d.~samples are needed to distinguish $P$ from $Q$
subject to $\eps$-differential privacy, and what sort of tests have
optimal sample complexity?  Specifically, we characterize this sample
complexity up to constant factors in terms of the structure of $P$ and
$Q$ and the privacy level $\eps$, and show that this sample complexity
is achieved by a certain randomized and clamped variant of the
log-likelihood ratio test.  Our result is an analogue of the
classical Neyman--Pearson lemma in the setting of private hypothesis
testing.  We also give an application of our result to private change-point detection.
Our characterization applies more generally to hypothesis tests satisfying essentially
any notion of algorithmic stability, which is known to imply strong generalization bounds in adaptive data analysis,
and thus our results have applications even when privacy is not a primary concern.
\end{abstract}

%

\vfill
\newpage

\newcommand{\sayP}{\text{``$P$''}}
\newcommand{\sayQ}{\text{``$Q$''}}
\newcommand{\cH}{{\mathcal{H}}}

\pagenumbering{arabic}
\section{Introduction}

Hypothesis testing plays a central role in statistical inference,
analogous to that of decision or promise problems in computability and
complexity theory. A hypothesis testing problem is specified by two
disjoint sets of probability distributions over the same set, called
hypotheses, $\cH_0$ and $\cH_1$. An algorithm $T$ for this problem, called a \emph{hypothesis test}, is given a sample $x$ from an unknown
distribution $P$, with the requirement that $T(x)$ should, with high probability, output ``0'' if $P\in \cH_0$, and ``1''
if $P\in\cH_1$. There is no requirement for distributions outside of
$\cH_0 \cup \cH_1$.  In computer science, such problems sometimes
go by the name \emph{distribution property testing.}

Hypothesis testing problems are important in their own right, as they
formalize yes-or-no questions about an underlying population based on a
randomly drawn sample, such as whether education strongly influences life
expectancy, or whether a particular medical treatment is
effective. Successful hypothesis tests with high degrees of confidence 
remain the gold standard for publication in top journals in the
physical and social sciences.  Hypothesis testing problems are also
important in the theory of statistics and machine
learning, as many lower bounds for estimation and
optimization problems are obtained
by reducing from hypothesis testing.

This paper aims to understand the structure and sample complexity of
optimal hypothesis tests subject to strong privacy guarantees. Large
collections of personal information are now ubiquitous, but their use
for effective scientific discovery remains limited by concerns about
privacy. In addition to the well-understood settings of data collected
during scientific studies, such as clinical experiments and surveys, many other data sources where privacy concerns are paramount are now being tapped for socially beneficial analysis, such as Social Science One~\cite{SSO}, which aims
to allow access to data collected by Facebook and similar
companies.

We study algorithms that satisfy \emph{differential  privacy (DP)}~\cite{DworkMNS06}, a restriction on the algorithm that
ensures meaningful privacy guarantees against an adversary with
arbitrary side information~\cite{KasiviswanathanS08}. Differential privacy has come
to be the de facto standard for the analysis of private data, used as a
measure of privacy for data analysis systems at Google~\cite{ErlingssonPK14},
Apple~\cite{AppleDP17}, and the U.S.~Census Bureau~\cite{DajaniLSKRMGDGKKLSSVA17}. 
Differential privacy and related distributional notions of
\emph{algorithmic stability} can be crucial for statistical validity
even when confidentiality is not a direct concern, as they provide
generalization guarantees in an adaptive setting~\cite{DworkFHPRR15}.

Consider an algorithm that takes a set of data points from a set
$\cX$---where each point belongs to some individual---and produces some
public output. We say the algorithm is differentially private if no
single data point can significantly impact the distribution on
outputs. Formally, we say two data sets $x, x'\in \cX^n$ of the same size are \emph{neighbors} if they differ in at most one entry.

\begin{definition}[\cite{DworkMNS06}]
  A randomized algorithm $T$ taking inputs in $\cX^*$ and returning
  random outputs in a space with event set $\cS$ is
  \emph{$\eps$-differentially private} if for all $n\geq 1$, for
  all neighboring data sets $x,x'\in \cX^n$, and for all events $S\in
  \cS$,
  $$\pr{}{T(x)\in S}\leq e^\eps \pr{}{T(x')\in S}\,. $$
  For the special case of tests returning output in
  $\bracket{0,1}$, the output distribution is characterized by the
  probability of returning ``1''. Letting $g(x)=\pr{}{T(x)=1}$, we can
  equivalently require that 
  $$\max\paren{\frac{g(x) }{ g(x')} , \frac{1-g(x)}{1-g(x')}} \leq
  e^{\eps} \, . $$
\end{definition}

For algorithms with binary outputs, this definition is essentially
equivalent to all other commonly studied notions of privacy and
distributional algorithmic stability (see ``Connections to Algorithmic
Stability'', below). 

\medskip
\noindent\textbf{Contribution: The Sample Complexity of Private Tests for Simple Hypotheses.}
We focus on the setting of i.i.d.~data and singleton hypotheses $\cH_0,\cH_1$, which are called \emph{simple hypotheses}.  The algorithm is given a sample of $n$ points $x_1,\dots,x_n$ drawn i.i.d. from one of two distributions, $P$ or $Q$,
and attempts to determine which one generated the input.  That is, $\cH_0=\bracket{P^n}$ and $\cH_1 =\bracket{Q^n}$.  We investigate the following question.

\vspace{-5pt}
\begin{quote} \em
  Given two distributions $P$ and $Q$ and a privacy parameter $\eps>0$,
  what is the minimum number of samples (denoted  $\SC^{P,Q}_{\eps}$) needed
  for an $\eps$-differentially private test to reliably distinguish $P$ from $Q$, and 
  what are optimal private tests?
\end{quote}
\vspace{-5pt}

These questions are well understood in the classical, nonprivate
setting. The number of samples needed to distinguish $P$ from $Q$ is
$\Theta(1/H^2(P,Q))$, where $H^2$ denotes the squared Hellinger
distance \eqref{eq:Hellinger}.\footnote{This statement is folklore, but see, e.g.,~\cite{BarYossef02} for the lower bound,~\cite{Canonne17b} or Corollary~\ref{cor:sllr-sc} for the upper bound.}
Furthermore, by the Neyman--Pearson
lemma, the exactly optimal test consists of computing the likelihood
ratio $P^n(x)/Q^n(x)$ and comparing it to some threshold.

We give analogous results in the private setting.  First, we give a
closed-form expression that characterizes the sample complexity up to
universal multiplicative constants, and highlights the range of
$\eps$ in which private tests use a similar amount of data to the
best nonprivate ones. We also give a specific, simple test that achieves that
sample complexity. Roughly, the test makes a noisy decision based on a
``clamped'' log likelihood ratio in which the influence of each data point
is limited. 
The sample complexity has the form
$\Theta(1/\advg_1)$, where $\advg_1$ is the advantage of the test over
random guessing on a sample of size $n=1$.  The optimal test and its sample complexity
are described in Theorem~\ref{thm:private-hypothesis-testing}.

Our result provides the first instance-specific characterization of a statistical problem's complexity for differentially
private algorithms. Understanding the private sample complexity of statistical
problems is delicate.  We know there are regimes where statistical
problems can be solved privately ``for free'' asymptotically
(e.g. \cite{DworkMNS06,ChaudhuriMS11,Smith11,KarwaV18}) and others where there is a significant cost, even for
relaxed definitions of privacy~(e.g. \cite{BunUV14,DworkSSUV15}), and we remain far
from a general characterization of the statistical cost of privacy.
Duchi, Jordan, and Wainwright~\cite{DuchiJW13} give a characterization for the
special case of simple tests by \emph{local} differentially private algorithms, a more restricted setting where
samples are randomized individually, and the test makes a decision
based on these randomized samples. Our characterization in
the general case is more involved, as it exhibits several distinct
regimes for the parameter $\eps$.


Our analysis relies on a number of tools of independent
interest: a characterization of private hypothesis testing in terms of couplings
between distributions on $\cX^n$, and a novel interpretation
of Hellinger distance as the advantage over random guessing
of a specific,
randomized likelihood ratio test.

\medskip
\noindent\textbf{The Importance of Simple Hypotheses.}
Many of the hypotheses that arise in application are not simple, but are so-called
\emph{composite hypotheses}.  For example, deciding if two features are
independent or far from it involves sets $\cH_0$ and $\cH_1$ each
containing many distributions. Yet many of those tests can be reduced
to simple ones.  For example, deciding if the mean of a Gaussian is less than 0 or
greater than 1 can be reduced to testing if the mean is either 0 or
1.  Furthermore, simple tests arise in lower bounds for
estimation---the well-known characterization of parametric estimation
in terms of Fisher information is obtained by showing that the Fisher
information measures variability in the Hellinger distance and then
employing the Hellinger-based characterization of nonprivate simple
tests~(e.g. \cite[Chap. II.31.2, p.180]{Borovkov99}).

Our characterization of private testing implies similar lower bounds
for estimation (along the lines of lower bounds of Duchi and Ruan~\cite{DuchiR18} in the local model of differential privacy).

\medskip
\noindent\textbf{Connection to Algorithmic Stability.}
For hypothesis tests with constant error probabilities, sample
complexity bounds for differential privacy are equivalent, up to
constant factors, to sample complexity bounds for other notions of
distributional algorithmic stability, such as
$(\eps,\delta)$-DP~\cite{DworkKMMN06}, concentrated
DP~\cite{DworkR16,BunS16}, KL- and
TV-stability~\cite{WangLF16,BassilyNSSSU16} (see \cite[Lemma
5]{AcharyaSZ18a}).
(Briefly: if we ensure
that $\Pr(T(x)=1)\in[0.01,0.99]$ for all $x$, then an additive change of
$\eps$ corresponds to an multiplicative change of $1\pm O(\eps)$, and
vice-versa.)
%
%
Consequently, our results imply optimal tests for use in
conjunction with stability-based generalization bounds for adaptive
data analysis, which has generated significant interest in recent years~\cite{DworkFHPRR15,DworkFHPRR-nips-15,DworkFHPRR-science-15,RussoZ16,BassilyNSSSU16,RogersRST16,XuR17,FeldmanS17,FeldmanS18}.

\subsection{Hypothesis Testing}
To put our result in context, we review classical results about non-private hypothesis testing.  Let $P$ and $Q$ be two probability distributions over an arbitrary domain $\mathcal{X}$.  A \emph{hypothesis test} $K \from \cX^* \to \{\sayP,\sayQ\}$ is an algorithm that takes a set of samples $x \in \cX^*$ and attempts to determine if it was drawn from $P$ or $Q$.  
Define the \emph{advantage} of a test $K$ given $n$ samples as
\begin{equation}\label{advantage}
\advg_n(K) = \pr{x \sim P^n}{K(x) = \sayP} - \pr{x \sim Q^n}{K(x) = \sayP}.
\end{equation}
We say that $K$ \emph{distinguishes $P$ from $Q$ with sample complexity $\SC^{P,Q}(K)$} if for every $n \geq \SC^{P,Q}(K)$, $\advg_n(K) \geq 2/3$.
We say $\SC^{P,Q} = \min_{K} \SC^{P,Q}(K)$ is the \emph{sample complexity of distinguishing $P$ from $Q$}.

Most hypothesis tests are based on some real-valued \emph{test statistic} $S \from \cX^* \to \R$ where
\begin{equation*}
K_S(x)= \begin{cases} 
\sayP & \text{if } S(X)\ge\kappa \\ 
\sayQ & \text{otherwise}
\end{cases}
\end{equation*}
for some threshold $\kappa$.  We will sometimes abuse notation and use the test statistic $S$ and the implied hypothesis test $K_S$ interchangeably.

The classical Neyman--Pearson Lemma says that the exact optimal
test\footnote{More precisely, given any test $K$, there is a setting
  of the threshold $\kappa$ for the log-likelihood ratio test that
  weakly dominates $K$, meaning that $\mathbb{P}_{x \sim Q}[\LLR(x) = P] \leq \mathbb{P}_{x
    \sim Q}[K(x) = P]$ and $\mathbb{P}_{x \sim P}[\LLR(x) = Q] \leq \mathbb{P}_{x \sim
    P}[K(x) = Q]$ (keeping the true positive rates $\mathbb{P}_{x \sim P}[K(x) = P]$, $\mathbb{P}_{x \sim Q}[K(x) = Q]$ fixed). One may need to randomize the decision when
  $S(X)=\kappa$ to achieve some tradeoffs between false negative and
  positive rates.} for distinguishing $P,Q$ is the \emph{log-likelihood ratio test} given by the test statistic

\begin{equation}
\LLR(x_1,\dots,x_n) = \sum_{i=1}^{n} \log \frac{P(x_i)}{Q(x_i)}
\,.\label{eq:LLRtest}
\end{equation}
Another classical result says that the optimal sample complexity is characterized by the \emph{squared Hellinger distance} between $P,Q$, which is defined as
\begin{equation} \label{eq:Hellinger}
H^2(P,Q) = \frac{1}{2} \int_{\mathcal{X}} \paren{\sqrt{P(x)} - \sqrt{Q(x)}}^{2}\dx.
\end{equation}
Specifically, $\SC^{P,Q} = \SC^{P,Q}(\LLR) = \Theta(1/H^2(P,Q))$. Note that the same metric provides upper and lower bounds on the sample complexity.

\subsection{Our Results} \label{sec:results}

Our main result is an approximate characterization of the sample complexity of $\eps$-differentially private tests for distinguishing $P$ and $Q$.  Analogous to the non-private case, we will write 
$
\SC^{P,Q}_{\eps} = \min_{\textrm{$\eps$-DP $K$}} \SC^{P,Q}(K)
$
to denote the \emph{sample complexity of $\eps$-differentially privately ($\eps$-DP) distinguishing $P$ from $Q$}, and we characterize this quantity up to constant factors in terms of the structure of $P,Q$ and the privacy parameter $\eps$.
Specifically, we show that a \emph{privatized clamped log-likelihood ratio test} is optimal up to constant factors.
This privatization may be achieved through either the Laplace or Exponential mechanism, and we will prove optimality of both methods.

For parameters $b \geq a$, we define the \emph{clamped log-likelihood ratio statistic},
\begin{equation*} \label{eq:ncllr}
\CLLR_{a,b}(x) = \sum_i \left[ \log \frac{P(x_i)}{Q(x_i)} \right]_{a}^{b},
\end{equation*}
where $[ \cdot ]_{a}^{b}$ denotes the projection onto the interval
$[a,b]$ (that is, $[ z ]_{a}^{b} = \max(a, \min(z, b))$).

Define the \emph{soft clamped log-likelihood test}:
\begin{equation*}
\sCLLR_{a,b}(x)= \begin{cases} 
P & \text{with probability } \propto \exp(\frac12 \CLLR_{a,b}(x))\\ 
Q & \text{with probability }  \propto 1
\end{cases}
\end{equation*}

The test $\sCLLR$ is an instance of the \emph{exponential mechanism}~\cite{McSherryT07}, and thus $\sCLLR_{a,b}$ satisfies $\eps$-differential privacy for $\eps = \frac{b-a}{2}$.  

Similarly, define the \emph{noisy clamped log-likelihood ratio test}:
\begin{equation*}
\NCLLR_{a,b}(X) = 
\begin{cases} 
  P & \text{if } \CLLR_{a,b}(x) + \mathrm{Lap}\left(\frac{1}{\eps(b-a)}\right) > 0\\ 
Q & \text{otherwise}
\end{cases}
\end{equation*}
The test $\NCLLR$ is an instance of postprocessing the Laplace mechanism~\cite{DworkMNS06}, and satisfies $\eps$-differential privacy.

Our main result is that, for every $P,Q$, and every $\eps$, the tests $\sCLLR_{-\eps',\eps}$ and $\NCLLR_{-\eps',\eps}$ are optimal up to constant factors, for some appropriate $0 \leq \eps' \leq \eps$.  To state the result more precisely, we introduce some additional notation.  First define
\begin{equation}\label{def:tau}
\tau = \tau(P,Q) \triangleq \max \left\{\int_{\mathcal{X}} \max\{P(x)-e^{\eps}Q(x),0\}\dx, \int_{\mathcal{X}}\max\{Q(x)-e^{\eps}P(x),0\}\dx \right\},
\end{equation}
and assume without loss
of generality that $\tau=\int_{\mathcal{X}}
\max\{P(x)-e^{\eps}Q(x),0\}\dx$, which we assume for the
remainder of this work.\footnote{For $\alpha \geq 0$, the quantity
  $D_{\alpha}(P \| Q)=\int \max(P(x)-\alpha Q(x),0)\dx$ is an
  $f$-divergence and has appeared in the literature before under the
  names \emph{$\alpha$-divergence}, \emph{hockey-stick divergence}, or
  \emph{elementary divergence}~\cite{LiuCV15,BalleBG18} (for
  $\alpha=1$, one obtains the usual total variation distance). Thus,
  $\tau$ is the maximum of the divergences $D_{e^{\eps}}(P \| Q)$ and
  $D_{e^{\eps}}(Q \| P)$. It can also be described as the smallest
  value $\delta$ such that $P$ and $Q$ are $(\eps,
  \delta)$-indistinguishable~\cite{DworkR14}.} Next, let $0 \leq \eps' \leq \eps$ be the largest value such that
\[
\int_{\mathcal{X}} \max\{P(x)-e^{\eps}Q(x),0\}\dx = \int_{\mathcal{X}}\max\{Q(x)-e^{\eps'}P(x),0\}\dx = \tau,
\]
whose existence is guaranteed by a continuity argument (a formal argument is given in Appendix~\ref{sec:eps-proof}).  We give an illustration of the definition of $\tau$ and $\eps'$ in Figure~\ref{fig:clamp}.  Finally, define $\tilde{P} = \min\{e^{\eps}Q, P\}$ and $\tilde{Q} = \min\{e^{\eps'}P, Q\}$ and normalize by $(1-\tau)$ to obtain distributions 
\begin{equation}\label{def:primedistributions}
P'=\tilde{P}/(1-\tau) \;\; \text{ and } \;\; Q'= \tilde{Q}/(1-\tau).
\end{equation}  
The distributions $P',Q'$ are such that
$$-\eps' \leq \log  \frac{P'(x)}{Q'(x)} \leq \eps \, ,$$
and
$$P = (1-\tau) P' + \tau P'' \quad \text{and} \quad Q = (1-\tau)Q' +
\tau Q'' \, ,$$
where $P''$ and $Q''$ are distributions with disjoint support. 
The quantity $\tau$ is
the smallest possible number for which such a representation is
possible. 
With these definitions in hand, we can now state our main result.
\begin{thm}
\label{thm:private-hypothesis-testing}
For every pair of distributions $P,Q$, and every $\eps > 0$, the optimal sample complexity for $\eps$-differentially private tests is achieved by either the soft or noisy clamped log-likelihood test, and satisfies
  \begin{align*}
    \SC^{P,Q}_{\eps} &= \Theta(\SC^{P,Q}(\NCLLR_{-\eps',\eps})) = \Theta(\SC^{P,Q}(\sCLLR_{-\eps',\eps})) \\
    &= \Theta\left( \frac{1}{\eps \tau(P,Q) + (1-\tau)H^2(P',Q')} \right) = \Theta\left(\frac{1}{\advg_1(\sCLLR_{-\eps',\eps})} \right).
  \end{align*}
\end{thm} 
When $\eps \geq \max_x\left|\log P(x)/Q(x)\right|$, Theorem
\ref{thm:private-hypothesis-testing} reduces to
$\SC^{P,Q}_{\eps}=\Theta\left(\frac{1}{H^2(P,Q)}\right)$, which is the
sample complexity for distinguishing between $P$ and $Q$ in the
non-private setting. This implies that we get privacy for free asymptotically in this parameter regime. We will focus on proving the first equality in this paper, the second is proved in Appendix~\ref{sec:adv-scllr}.

\medskip
\noindent\textbf{Comparison to Known Bounds.} For $\eps<1$, the bounds
$$
\frac{1}{H^2(P,Q)} \leq \SC_{\eps}^{P,Q} \leq O\left(\frac{1}{\eps H^2(P,Q)}\right)
$$
follow directly from the non-private sample complexity.  Namely, the lower bound is the non-private sample complexity and the upper bound is obtain by applying the \emph{sample-and-aggregate} technique~\cite{NissimRS07} to the optimal non-private test.  They can be recovered from Theorem \ref{thm:private-hypothesis-testing} by noting that
\begin{align*}
  \eps H^2(P,Q) = \frac{\eps}{2} \| \sqrt P - \sqrt Q \|^2_2 
  &= O\left(\eps\left\|\sqrt{P}-\sqrt{\tilde{P}}\right\|^2_2+\eps\left\|\sqrt{\tilde{P}}-\sqrt{\tilde{Q}}\right\|^2_2+\eps\left\|\sqrt{\tilde{Q}}-\sqrt{Q}\right\|^2_2\right)\\
&=O( \eps\tau+\eps(1-\tau)H^2(P',Q')+\eps\tau)\\
&= O(\eps\tau+(1-\tau)H^2(P',Q'))
\end{align*}
and 
\begin{align*}
  \eps\tau+(1-\tau)H^2(P',Q')&=\eps \cdot \int_{\Set{S}}|P(x)-e^{\eps}Q(x)|\dx+\|\sqrt{\tilde{P}}-\sqrt{\tilde{Q}}\|_2^2\\
  &\le \eps \cdot \int_{\Set{S}}|P(x)-Q(x)|\dx+\|\sqrt{P}-\sqrt{Q}\|_2^2\\
  &\le \eps \cdot \frac{1+e^{\eps/2}}{e^{\eps/2}-1} \cdot \int_{\Set{S}}(\sqrt{P(x)}-\sqrt{Q(x)})^2\dx+H^2(P,Q)\\
&=O(H^2(P,Q)),
\end{align*}
where $\Set{S} = \setOf{x}{P(x) - e^{\eps}Q(x) > 0}$. 

\begin{figure}[t!] 
\centering
\ifnum\nofig=0
\begin{tikzpicture}    \begin{axis}[domain=0:1,samples=100,axis lines = left, xlabel = $x$, xtick = {0,1}]
        \addplot+[mark=none, name path=P,color=blue] {2*x};
        \addlegendentry{$P(x)$};
        \addplot+[mark=none,color=blue,dashed] {e^(0.2)*2*x};
        \addlegendentry{$e^\eps P(x)$};
        \addplot+[mark=none, name path=Q,color=red] {1/0.614*1/(1+sqrt(x))};
        \addlegendentry{$Q(x)$};
        \addplot+[mark=none,color=red,dashed] {e^(0.2)*1/0.614*1/(1+sqrt(x))};
        \addlegendentry{$e^\eps Q(x)$};
        \addplot+[mark=none, name path=EP,draw=none] {min(e^(0.2)*2*x,1/0.614*1/(1+sqrt(x)))};
        \addplot+[mark=none, name path=EQ,draw=none] {min(e^(0.2)*1/0.614*1/(1+sqrt(x)), 2*x)};
        \addplot[blue!10] fill between[of = P and EQ];
        \addplot[red!10] fill between[of = Q and EP];
    \end{axis}
\end{tikzpicture}
\else
    \textsc{\color{red}FIGURE SHALL APPEAR}
\fi
\caption{An illustration of the definition of $\tau$, for $\eps=0.2$
  and for two densities $P,Q$ over $\mathcal{X}=[0,1]$. The blue
  shaded area represents $\int \max\{ P(x) - e^\eps Q(x), 0 \}\dx$,
  while the red corresponds to $\int \max\{ Q(x) - e^\eps P(x), 0
  \}\dx$. The larger of these two is $\tau(P,Q)$.  If the blue area is larger than the red area, the definition of $\eps'$ corresponds to lowering the dotted blue curve until the two are the same size.}  \label{fig:clamp}
\end{figure}
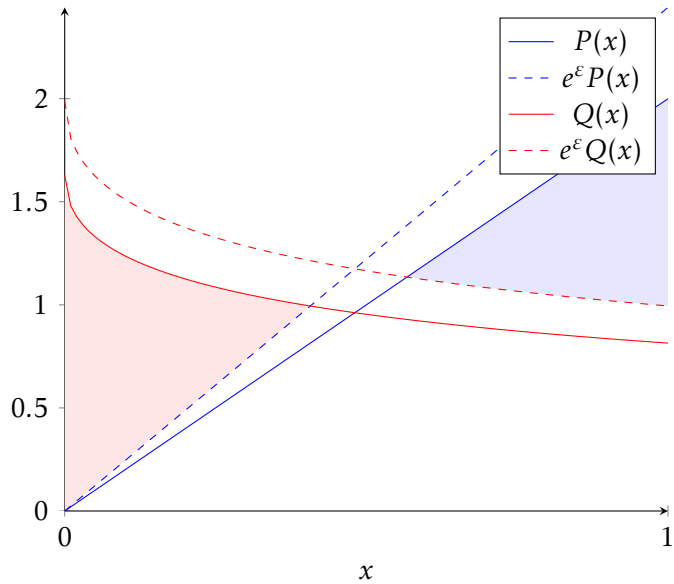

\subsubsection{Application: Private Change-Point Detection}
\label{sec:intro-cpd}
As an application of our result, we obtain optimal private algorithms
for \emph{change-point detection}.  Given distributions $P$ and $Q$,
an algorithm solving \emph{offline change-point detection for $P$ and
  $Q$} takes a stream $x = (x_1,x_2,\dots,x_n) \in \cX^n$ with the
guarantee that the there is an index $k^*$ such that first $k^*$
elements are sampled i.i.d.~from $P$ and the latter elements are
sampled i.i.d.~from $Q$, and attempts to output $\hat{k} \approx k^*$.
We can also consider an online variant where elements $x_i$ arrive one
at a time.

Change-point detection has a long history in statistics and information theory (e.g.~\cite{Shewhart31,Page54,Page55,Shiryaev63,Lorden71,Pollak85, Moustakides86, Pollak87, Lai95, Kulldorff01, Mei06,  VeeravalliB14}). Cummings {\em et al.}~\cite{CummingsKMTZ18} recently gave the first private algorithms for change-point detection.  Their algorithms are based on a private version of the log-likelihood ratio, and in cases where the log-likelihood ratio is not strictly bounded, they relax to a weaker distributional variant of differential privacy.  Using Theorem~\ref{thm:private-hypothesis-testing}, we can achieve the standard worst-case notion of differential privacy, and to achieve optimal error bounds for every $P,Q$.

\begin{thm}[Informal] \label{thm:cpd-intro}
For every pair of distributions $P$ and $Q$, and every $\eps > 0$, there is an $\eps$-differentially private algorithm that solves offline change-point detection for $P$ and $Q$ such that, with probability at least $9/10$, $|\hat{k} - k^*| = O(\SC^{P,Q}_{\eps})$.
\end{thm}

The expected error in this result is optimal up to constant factors for every pair $P,Q$, as one can easily show that the error must be at least $\Omega(\SC^{P,Q}_{\eps})$.  
Theorem~\ref{thm:cpd-intro} can be extended to give an arbitrarily small probability $\beta$ of failure, and can be extended to the online change-point detection problem, although with more complex accuracy guarantees.  
Our algorithm introduces a general reduction from private change-point detection for families of distributions $\cH_0$ and $\cH_1$ to private hypothesis testing for the same family, which we believe to be of independent interest.

\subsection{Techniques}

\noindent\textbf{First Attempts.}
A folklore result based on the sample-and-aggregate paradigm~\cite{NissimRS07} shows that for every $P,Q$ and every $\eps > 0$, $\SC^{P,Q}_{\eps} = O(\frac{1}{\eps} \SC^{P,Q})$, meaning privacy comes at a cost of \emph{at most} $O(\frac{1}{\eps})$.\footnote{See, e.g.,~\cite{CaiDK17} for a proof.}  However, there are many examples where $\SC^{P,Q}_{\eps} = O(\SC^{P,Q})$ even when $\eps = o(1)$, and understanding this phenomenon is crucial.  

A few illustrative pairs of distributions $P,Q$ will serve to demonstrate the difficulties that go into formulating and proving Theorem~\ref{thm:private-hypothesis-testing}.  First, consider the domain $\cX = \{0,1\}$ of size two (i.e.~Bernoulli distributions).  To distinguish $P = \mathrm{Ber}(\frac{1+\alpha}{2})$ from $Q = \mathrm{Ber}(\frac{1-\alpha}{2})$, the optimal non-private test statistic is $S(x) = \sum_{i} x_i$, which requires $\Theta(\frac{1}{ \alpha^2} ) = \Theta( \frac{1}{H^2(P,Q)} )$ samples.

To make the test differentially private, one can use a soft version of the test that outputs \sayP~with probability proportional to $\exp( \eps \sum_i x_i )$ and \sayQ~with probability proportional to $\exp( \eps \sum_i 1-x_i )$.  This private test has sample complexity 
$\Theta(\frac{1}{\alpha^2} + \frac{1}{\alpha \eps}),$
which is optimal.  In contrast, if $P = \mathrm{Ber}(0)$ and $Q = \mathrm{Ber}(\alpha)$, then the non-private sample complexity becomes $\Theta(\frac{1}{\alpha})$, and the optimal private sample complexity becomes $\Theta( \frac{1}{\alpha \eps})$.  In general, for $\cX = \{0,1\}$, one can show that 
\begin{equation} \label{eq:bersc}
\SC^{P,Q}_{\eps} = \Theta\left( \frac{1}{H^2(P,Q)} + \frac{1}{\eps  \mathit{TV}(P,Q)} \right)\,.
\end{equation}

The sample complexity of testing Bernoulli distributions \eqref{eq:bersc} already demonstrates an important phenomenon in private hypothesis testing---for many distributions $P,Q$, there is a ``phase transition'' where the sample complexity takes one form when $\eps$ is sufficiently large and another when $\eps$ is sufficiently small, and often the sample complexity in the ``large $\eps$ regime'' is equal to the non-private complexity up to lower order terms.  A key challenge in obtaining Theorem~\ref{thm:private-hypothesis-testing} is to understand these transitions, and to understand the sample complexity in each regime.

Since each of the terms in \eqref{eq:bersc} is a straightforward lower bound on the sample complexity of private testing, one might conjecture that \eqref{eq:bersc} holds for every pair of distributions.  However, our next illustrative example shows that this conjecture is false even for the domain $\cX = \{0,1,2\}$.  Consider the distributions given by the densities
$$
P = (0, 0.5, 0.5) \quad \textrm{and} \quad Q = (2\alpha^{3/2}, 0.5 + \alpha - \alpha^{3/2}, 0.5 - \alpha - \alpha^{3/2})\,.
$$
For these distributions, the log-likelihood ratio statistic is roughly equivalent to the statistic that counts the number of occurrences of ``0,'' $S_{\{0\}}(x) = \sum_{i} \ind{x_i = 0}$, and has sample complexity $\Theta(\frac{1}{\alpha^{3/2}})$.  For this pair of distributions, the optimal private test depends on the relationship between $\alpha$ and $\eps$.  One such test is to simply use the soft version of $S_{\{0\}}$, and the other is to use the soft version of the test $S_{\{0,1\}} = \sum_{i} \ind{x_i \in \{0,1\}}$ that counts occurrences of the first two elements.  One can prove that the better of these two tests is optimal up to constant factors, giving
$$
\SC^{P,Q}_{\eps} = \Theta\left( \min \left\{ \frac{1}{\alpha^{3/2} \eps}, \frac{1}{\alpha^{2}} + \frac{1}{\alpha \eps} \right\} \right)\,.
$$
For these distributions,~\eqref{eq:bersc} reduces to $\Theta( \frac{1}{\alpha^{3/2}} + \frac{1}{\alpha \eps})$, so these distributions show that the optimal sample complexity can be much larger than~\eqref{eq:bersc}.  Moreover, these distributions exhibit that the optimal sample complexity can vary with $\eps$ in complex ways, making several transitions and never matching the non-private complexity unless $\alpha$ or $\eps$ is constant.

\medskip
\noindent\textbf{Key Ingredients.}
The second example above demonstrates that the optimal test itself can vary with $\eps$ in an intricate fashion, which makes it difficult to construct a single test for which we can prove matching upper and lower bounds on the sample complexity.  The use of the clamped log-likelihood test arose out of an attempt to find a single test that is optimal for the second pair of distributions $P$ and $Q$, and relies on a few crucial technical ingredients.

First, our upper bound on sample complexity relies on a key
observation that the Hellinger distance between two distributions is
exactly the advantage, $\advg_1$, of a soft log-likelihood ratio test
$\sLLR$ on a sample of size 1.
The $\sLLR$ test is a randomized test that outputs $P$ with
probability proportional to $\sqrt{P^n(x)/Q^n(x)} = e^{LLR(x)/2}$,
where $\LLR(x)= \sum_i \log \frac{P(x_i)}{Q(x_i)}$, and $Q$ with
probability proportional to 1.  This characterization of $H^2(P,Q)$ as
the advantage of the $\sLLR$ tester may be of independent interest.

This observation is crucial for our work, because it implies that
$\sLLR$ is $\eps$-DP if $\sup_{x\in \cX} \left|\log \frac{P(x)}{Q(x)}
\right|\le\eps$. That is, in the case that $\sup_{x\in \cX} \left|\log \frac{P(x)}{Q(x)}
\right|\le\eps$,  we get $\eps$-DP \emph{for free} (since Hellinger
distance, and thus $\sLLR$, characterizes the optimal
asymptotic sample complexity). Thus, we use the clamped log-likelihood ratio test, which forces the log-likelihood ratio to be bounded.  Our lower bound in a sense shows that any loss of power in the test due to clamping is \emph{necessary} for differentially private tests.

The lower bound proof proceeds by finding a coupling $\rho$ of $P^n$ and $Q^n$ with low expected Hamming distance $\mathbb{E}_{(X,Y)\sim\rho}[d_H(X,Y)]$, which in turn implies that the sample complexity is large (see e.g.~\cite{AcharyaSZ18a}). The coupling we use essentially splits the support of $P$ and $Q$ into two subsets, those elements with $\log \frac{P(x)}{Q(x)} \ge\eps$ and the remaining elements. To construct the coupling, given a sample $X\sim P^n$, the high-ratio elements for which $\log\frac{P(x)}{Q(x)} \ge\eps$ are resampled with probability related to the ratio. The contribution of this step to the Hamming distance gives us the $\eps\tau$ part of the lower bound. The data set consisting of  the $m \leq n$ elements that have not yet been resampled is then coupled using the total variation distance coupling between $P^{m}$ and $Q^{m}$. Therefore, with probability $1-\tvd(P^{m}, Q^{m})$ this part of the data set remains unchanged. This part of the coupling results in the $H^2(P',Q')$ part of the lower bound. 

The proof of the upper bound also splits into two parts, roughly corresponding to the same aspects of the distributions $P$ and $Q$ as above.  That is, we view our tester as either counting the number of high-ratio elements or computing the log-likelihood ratio on low-ratio elements.  A useful observation is that this duality between the upper and lower bounds is inevitable. In Section \ref{lower}, we characterize the advantage of the optimal tester in terms of Wasserstein distance between $P$ and $Q$ with metric $\min\{\eps d_H(X,Y),1\}$. That is, the advantage of the optimal tester must be matched by some coupling of $P^n$ and $Q^n$. 

\subsection{Related Work}
Early work on differentially private hypothesis testing began in the Statistics community with~\cite{VuS09,UhlerSF13}.  More recently, there has been a significant number of works on differentially private hypothesis testing.  One line of work~\cite{WangLK15,GaboardiLRV16,KiferR17,KakizakiSF17,CampbellBRG18,SwanbergGGRGB19,CouchKSBG19} designs differentially private versions of popular test statistics for testing goodness-of-fit, closeness, and independence, as well as private ANOVA, focusing on the performance at small sample sizes.  
Work by Wang et al.~\cite{WangKLK18} focuses on generating statistical approximating distributions for differentially private statistics, which they apply to hypothesis testing problems.
A recent work by Awan and Slavkovic~\cite{AwanS18} gives a universally optimal test when the domain size is two, however Brenner and Nissim~\cite{BrennerN14} shows that such universally optimal tests cannot exist when the domain has more than two elements.  A complementary research direction, initiated by Cai {\em et al.}~\cite{CaiDK17}, studies the minimax sample complexity of private hypothesis testing.  \cite{AcharyaSZ18a} and~\cite{AliakbarpourDR18} have given worst-case nearly optimal algorithms for goodness-of-fit and closeness testing of arbitrary discrete distributions.  That is, there exists some worst-case distribution $P$ such that their algorithm has optimal sample complexity for testing goodness-of-fit to $P$.  Recently,~\cite{AcharyaKSZ18}~designed nearly optimal algorithms for estimating properties like support size and entropy. 

Another related area~\cite{GaboardiR18,Sheffet18,AcharyaCFT18} studies hypothesis testing in the \emph{local model} of differential privacy.  In particular, Duchi, Jordan, and Wainwright~\cite{DuchiJW13} proved an analogue of our result for the restricted case of locally differentially private algorithms. Their characterization shows that, the optimal sample complexity for $\eps$-DP local algorithms is $\Theta(1/(\eps^2 \mathit{TV}(P,Q)^2))$.  This characterization does not exhibit the same phenomena that we demonstrate in the central model---privacy never comes ``for free'' if $\eps = o(1)$, and the sample complexity does not exhibit different regimes depending on $\eps$.  More generally, local-model tests are considerably simpler, and simpler to reason about, than central-model tests.

There are also several rich lines of work attempting to give tight instance-specific characterizations of the sample complexities of various differentially private computations, most notably \emph{linear query release}~\cite{HardtT10,BhaskaraDKT12,NikolovTZ13,Nikolov15,KattisN17} and \emph{PAC and agnostic learning}~\cite{KasiviswanathanLNRS08,BeimelNS13a,FeldmanX13}.  The problems considered in these works are arguably more complex than the hypothesis testing problems we consider here, the characterizations are considerably looser, and are only optimal up to polynomial factors.

There has been a recent line of work~\cite{DworkFHPRR15,DworkFHPRR-nips-15,DworkFHPRR-science-15,RussoZ16,CummingsLNRW16,BassilyNSSSU16,RogersRST16,FeldmanS17,XuR17,FeldmanS18} on \emph{adaptive data analysis}, in which the same dataset is used repeatedly across multiple statistics analyses, the choice of each analysis depends on the outcomes of previous analyses.  The key theme in these works is to show that various strong notions of algorithmic stability, including differential privacy imply generalization bounds in the adaptive setting.  Our characterization applies to all notions of stability considered in these works.

As an application of our private hypothesis testing results, we provide algorithms for private change-point detection.
As discussed in Section~\ref{sec:intro-cpd}, change-point detection has enjoyed a significant amount of study in information theory and statistics. 
Our results are in the private minimax setting, as recently introduced by Cummings et al.~\cite{CummingsKMTZ18}.
We improve on their results by improving the detection accuracy and providing strong privacy guarantees for all pairs of hypotheses. 

\bigcnote{Add Organization section?}


\section{Upper Bound on Sample Complexity of Private Testing}

In this section, we establish the upper bound part of Theorem~\ref{thm:private-hypothesis-testing}, establishing that the soft clamped log-likelihood ratio test $\sCLLR$ and the noisy clamped log-likelihood ratio test $\NCLLR$ achieve the stated sample complexity. 
In more detail, we begin in Section~\ref{sec:hellinger} by characterizing the Hellinger distance between two distributions as the advantage, $\advg_1$, of a specific randomized test, $\sLLR$. 
Besides being of independent interest, this reformulation also yields some insight on its privatized variant, $\sCLLR$. 
In Section~\ref{sec:nllrt}, we introduce the noisy clamped log-likelihood ratio test ($\NCLLR$), and show that its sample complexity is at least that of $\sCLLR$. 
We then proceed in Section~\ref{sec:ncllr:sc} to upper-bound the sample complexity of $\NCLLR$, which also implies that same bound on that of $\sCLLR$.

Due to the number of named tests in this paper, the reader may find it useful to refer to Appendix~\ref{sec:glossary}, where we enumerate all the tests that we mention.
\subsection{Hellinger Distance Characterizes the Soft Log-Likelihood Test} \label{sec:hellinger}

Recall that the advantage of a test, $\advg_n$, was defined in Equation (\ref{advantage}) and the squared Hellinger distance, $H^2(P,Q)$, between two distributions $P$ and $Q$ is defined as 
\[
H^2(P,Q)= \frac{1}{2} \int_{\mathcal{X}} (\sqrt{P(x)}-\sqrt{Q(x)})^2 \dx,
\]

It has long been known that the Hellinger distance characterizes the asymptotic sample complexity of non-private testing (see, e.g.,~\cite{Borovkov99}).  In this section we show that the Hellinger distance exactly characterizes the advantage of the following randomized test given a single data point:
\[
\sLLR(x) =  
\begin{cases}
P & \text{ with probability } g(x)\\
Q & \text{ with probability } 1-g(x)
\end{cases}
\]
where 
\[
    g(x)=\frac{\exp\left(\frac{1}{2}\log\frac{P(x)}{Q(x)}\right)}{1+\exp\left(\frac{1}{2}\log\frac{P(x)}{Q(x)}\right)}\in[0,1].
\]

Considering the advantage of $\sLLR$ might seem puzzling at first glance, since the classic likelihood ratio test $\LLR$ enjoys a better advantage.
More specifically, the value of $\advg_1$ for these two tests is $H^2(P,Q)$ (Theorem~\ref{hellingertest}) and $TV(P,Q)$ (by the definition of total variation distance), respectively, and $H^2(P,Q) \leq TV(P,Q)$ (see, e.g.,~\cite{GibbsS02}), so it would appear that $\LLR$ is the better test.
There are two relevant features of $\sLLR$ which will be useful.
First, as mentioned before, $\sLLR$ is naturally private if the likelihood ratio is bounded.
Second, a tensorization property of Hellinger distance allows us to easily relate the advantage of the $n$-sample test to the advantage of the $1$-sample test.

\begin{theorem}\label{hellingertest}
For any two distributions $P,Q$, the advantage, $\advg_1$, of $\sLLR$ is $H^2(P,Q)$. 
\end{theorem}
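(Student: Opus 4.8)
The plan is a direct computation. Writing $g(x)=\Pr[\sLLR(x)=P]$, by definition of the advantage and of $\sLLR$,
\[
\advg_1(\sLLR) = \int_{\mathcal{X}} P(x)g(x)\dx - \int_{\mathcal{X}} Q(x)g(x)\dx = \int_{\mathcal{X}}\bigl(P(x)-Q(x)\bigr)g(x)\dx.
\]
The first step is to rewrite $g(x)$ in a symmetric form. Since $\exp\bigl(\tfrac12\log\tfrac{P(x)}{Q(x)}\bigr) = \sqrt{P(x)/Q(x)}$, multiplying numerator and denominator by $\sqrt{Q(x)}$ gives
\[
g(x) = \frac{\sqrt{P(x)}}{\sqrt{P(x)}+\sqrt{Q(x)}}.
\]
I would note that this form extends naturally (as the appropriate limit) to points where $P(x)=0$ or $Q(x)=0$, where $g(x)$ equals $0$ or $1$ respectively; this is the only place one must be mildly careful, since there $\log\tfrac{P(x)}{Q(x)}$ is infinite but $g$ remains well defined.

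The second step uses the algebraic identity $P(x)-Q(x) = \bigl(\sqrt{P(x)}-\sqrt{Q(x)}\bigr)\bigl(\sqrt{P(x)}+\sqrt{Q(x)}\bigr)$. Substituting this together with the simplified $g$ into the integral, the factor $\sqrt{P(x)}+\sqrt{Q(x)}$ cancels, leaving
\[
\advg_1(\sLLR) = \int_{\mathcal{X}}\bigl(\sqrt{P(x)}-\sqrt{Q(x)}\bigr)\sqrt{P(x)}\dx = \int_{\mathcal{X}} P(x)\dx - \int_{\mathcal{X}}\sqrt{P(x)Q(x)}\dx = 1 - \int_{\mathcal{X}}\sqrt{P(x)Q(x)}\dx.
\]

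Finally, expanding the square in the definition of $H^2$ gives $H^2(P,Q) = \tfrac12\int_{\mathcal{X}} \bigl(P(x)+Q(x) - 2\sqrt{P(x)Q(x)}\bigr)\dx = 1 - \int_{\mathcal{X}}\sqrt{P(x)Q(x)}\dx$, which coincides with the previous display, proving $\advg_1(\sLLR)=H^2(P,Q)$. There is no substantive obstacle; beyond the handling of vanishing densities mentioned above, the only thing to watch is that for full generality one should phrase the integrals with respect to a common dominating measure rather than assuming densities against a fixed base measure, but this changes nothing in the argument.
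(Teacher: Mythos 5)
Your proof is correct and is essentially the same direct computation as the paper's: both hinge on simplifying $g(x)$ to $\sqrt{P(x)}/(\sqrt{P(x)}+\sqrt{Q(x)})$ and on the factorization $P-Q=(\sqrt{P}-\sqrt{Q})(\sqrt{P}+\sqrt{Q})$, with the paper reading the identity from $H^2$ toward the advantage and you reading it the other way via the Bhattacharyya coefficient $1-\int\sqrt{PQ}$. Your remarks about vanishing densities and a common dominating measure are sensible but minor.
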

\begin{proof}
Note that we can rewrite 
\begin{align*}
H^2(P,Q) &= \frac12 \int_{\cX} (\sqrt{P(x)}-\sqrt{Q(x)})^2 \dx \\
&= \frac12 \int_{\cX} (P(x)-Q(x))\frac{\sqrt{P(x)}-\sqrt{Q(x)}}{\sqrt{P(x)}+\sqrt{Q(x)}} \\
&= \frac12 \int_{\cX} (P(x)-Q(x))\frac{\sqrt{\frac{P(x)}{Q(x)}}-1}{\sqrt{\frac{P(x)}{Q(x)}}+1}\dx.
\end{align*}
Now, $g(x)=\frac{\sqrt{\frac{P(x)}{Q(x)}}}{\sqrt{\frac{P(x)}{Q(x)}}+1} = \frac{1}{2}\left(\frac{\sqrt{\frac{P(x)}{Q(x)}}-1}{\sqrt{\frac{P(x)}{Q(x)}}+1}+1\right)$, and therefore 
\begin{align*}
H^2(P,Q) &= \int_{\cX} (P(x)-Q(x))g(x)\dx \\ 
&= \ex{x \sim P}{g(x)} - \ex{x \sim Q}{g(x)} \\ 
&= \pr{x \sim P}{\sLLR(x)=P} - \pr{x \sim Q}{\sLLR(x) = P}.
\end{align*} 
Thus, the advantage of $\sLLR$ is $H^2(P,Q)$, as claimed. 
\end{proof}

This tells us the advantage of the test which takes only one sample.
As a corollary, we can derive the sample complexity of distinguishing $P$ and $Q$ using $\sLLR$.

\begin{coro}
  \label{cor:sllr-sc}
  $SC^{P,Q}(\sLLR) = O\left(\frac{1}{H^2(P,Q)}\right)$.
\end{coro}
\begin{proof}
  Our analysis is similar to that of~\cite{Canonne17b}.
  Observe that the test $\sLLR$ which gets $n$ samples from either $P$ or $Q$ is equivalent to the test $\sLLR$ which gets $1$ sample from either $P^n$ or $Q^n$.
  By Theorem~\ref{hellingertest}, we have that the advantage of either (equivalent) test is $\advg = H^2(P^n,Q^n)$.
  We will require the following tensorization property of the squared Hellinger distance:
  $$H^2(P_1 \times \dots \times P_n, Q_1 \times \dots \times Q_n) = 1 - \prod_{i=1}^n (1 - H^2(P_i, Q_i)).$$
  With this in hand, 
  $$\advg = H^2(P^n, Q^n) = 1 - (1 - H^2(P,Q))^n = 1 - \exp\left(n \log (1 - H^2(P,Q))\right) \geq 1 - \exp(-nH^2(P,Q)).$$
  Setting $n = \Omega(1/H^2(P,Q))$, we get $\advg \geq 2/3$, as desired.
\end{proof}

\subsection{The Noisy Log-Likelihood Ratio Test}\label{sec:nllrt}
We now consider the noisy log-likelihood ratio test, which, similar to $\sCLLR_{-\eps',\eps}$, is also $\eps$-differentially private.
\begin{equation*}
\NCLLR_{-\eps',\eps}(X) = \sum_i \left[ \log \frac{P(x_i)}{Q(x_i)} \right]_{-\eps'}^{\eps} + \mathrm{Lap}(2)
\end{equation*}
Here $\mathrm{Lap}(2)$ denotes a \emph{Laplace random variable}, which has density proportional to $\exp(-|z|/2)$.  Readers familiar with differential privacy will note that $\sCLLR$ corresponds to the exponential mechanism, while $\NCLLR$ responds to the \emph{report noisy max} mechanism~\cite{DworkR14}, and thus the two should behave quite similarly.  In particular, we have the following lemma.
\begin{lemma}\label{noisetosoft}
For any $P$ and $Q$: 
\begin{enumerate}
\item $SC^{P,Q}(\NCLLR_{-\eps',\eps})= \Omega(SC^{P,Q}(\sCLLR_{-\eps',\eps}))$.
\item Furthermore, if $-\eps' \leq \log\frac{P}{Q} \leq \eps$ then $SC^{P,Q}(\NCLLR_{-\eps',\eps})=\Theta(SC^{P,Q}(\sCLLR_{-\eps',\eps}))$.
\end{enumerate}
\end{lemma}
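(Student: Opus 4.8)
The plan is to put both tests on a common footing and then exploit that they are close. On a sample $x$ of size $n$, set $Z := \CLLR_{-\eps',\eps}(x) = \sum_{i=1}^{n}\bigl[\log\tfrac{P(x_i)}{Q(x_i)}\bigr]_{-\eps'}^{\eps}$. Then $\sCLLR_{-\eps',\eps}$ outputs $P$ with probability $\sigma(Z/2)$, where $\sigma(t) = 1/(1+e^{-t})$, while $\NCLLR_{-\eps',\eps}$ (viewed as a test by thresholding its statistic at $0$) outputs $P$ with probability $F(Z)$, where $F$ is the $\Lap(2)$ cdf, i.e.\ $F(z) = 1 - \tfrac12 e^{-z/2}$ for $z \ge 0$ and $F(z) = \tfrac12 e^{z/2}$ for $z < 0$. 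Both $\sigma(\cdot/2)$ and $F$ are non-decreasing, satisfy $\phi(-z) = 1 - \phi(z)$, and decay at the common rate $e^{-|z|/2}$; the first step is the elementary pointwise estimate, valid for all $z$,
\[
\tfrac12\,\sigma(z/2)\ \le\ F(z)\ \le\ 2\,\sigma(z/2), \qquad \tfrac12\bigl(1 - \sigma(z/2)\bigr)\ \le\ 1 - F(z)\ \le\ 2\bigl(1 - \sigma(z/2)\bigr).
\]
Taking expectations under $P^n$ and $Q^n$, the Type-I and Type-II errors of $\sCLLR_{-\eps',\eps}$ and $\NCLLR_{-\eps',\eps}$ on $n$ samples agree up to a factor $2$; in particular $1 - \advg_n(\sCLLR_{-\eps',\eps}) \le 2\bigl(1 - \advg_n(\NCLLR_{-\eps',\eps})\bigr)$, and symmetrically with the roles of the two tests swapped.

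For part (1) this gives $\advg_n(\NCLLR_{-\eps',\eps}) \ge 2/3 \Rightarrow \advg_n(\sCLLR_{-\eps',\eps}) \ge 1/3$, and it remains to upgrade $1/3$ to $2/3$ at a constant multiplicative cost in $n$. I would argue that (i) $\advg_m(\sCLLR_{-\eps',\eps})$ is non-decreasing in $m$ — by a change of measure, the effect of appending a fresh sample, conditioned on $Z_m$, has the sign of $\bigl(\tfrac{P(x)}{Q(x)} - 1\bigr)\bigl(\sigma(\tfrac{Z_m + Y}{2}) - \sigma(\tfrac{Z_m}{2})\bigr)$, a product of two same-sign factors, hence is $\ge 0$ — and (ii) the errors of $\sCLLR_{-\eps',\eps}$ decay geometrically in $m$: using $\sigma(z/2) \le e^{z/4}$ and the product structure of $Z_m$,
\[
\Pr_{x \sim Q^m}\!\bigl[\sCLLR_{-\eps',\eps}(x) = P\bigr] = \mathbb{E}_{Q^m}\!\bigl[\sigma(Z_m/2)\bigr] \le \mathbb{E}_{Q^m}\!\bigl[e^{Z_m/4}\bigr] = \Bigl(\mathbb{E}_{Q}\!\bigl[e^{[\log(P/Q)]_{-\eps'}^{\eps}/4}\bigr]\Bigr)^{m},
\]
and the base is strictly below $1$ whenever $P \ne Q$: writing $\widehat P(x) := Q(x)\, e^{[\log(P(x)/Q(x))]_{-\eps'}^{\eps}}$, one checks from the definitions of $\widetilde P = \min\{e^{\eps}Q, P\}$ and $\eps'$ that $\int \widehat P = 1 - \tau(1 - e^{-\eps'}) \le 1$, so Cauchy--Schwarz gives $\mathbb{E}_Q\bigl[e^{[\log(P/Q)]_{-\eps'}^{\eps}/2}\bigr] = \int\sqrt{Q\,\widehat P} < 1$, hence also $\mathbb{E}_Q\bigl[e^{[\log(P/Q)]_{-\eps'}^{\eps}/4}\bigr] < 1$ by Jensen; symmetrically, using $\check Q(x) := P(x)\,e^{-[\log(P(x)/Q(x))]_{-\eps'}^{\eps}}$ with $\int\check Q = 1 - \tau(1 - e^{-\eps}) \le 1$, the Type-I error decays geometrically too. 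A matching geometric lower bound for the errors of $\NCLLR_{-\eps',\eps}$ — e.g.\ $\Pr_{Q^m}[\NCLLR_{-\eps',\eps} = P] \ge \tfrac12\,\mathbb{E}_{Q^m}[\min(1, e^{Z_m/2})]$, lower-bounded by a constant times $\bigl(\mathbb{E}_Q[e^{[\cdot]/2}]\bigr)^m$ after tilting, the two rates differing only by a factor $2$ in the exponent — pins $\SC^{P,Q}(\sCLLR_{-\eps',\eps})$ and $\SC^{P,Q}(\NCLLR_{-\eps',\eps})$ to the same geometric order, yielding $\SC^{P,Q}(\sCLLR_{-\eps',\eps}) = O\bigl(\SC^{P,Q}(\NCLLR_{-\eps',\eps})\bigr)$, which is part (1).

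For part (2), under the hypothesis $-\eps' \le \log\tfrac{P}{Q} \le \eps$ the clamp is vacuous, so $Z_m = \log\tfrac{P^m(x)}{Q^m(x)}$ and $\sCLLR_{-\eps',\eps}$ is exactly $\sLLR$ applied to a single sample from $P^m$ versus $Q^m$; by Theorem~\ref{hellingertest}, $\advg_m(\sCLLR_{-\eps',\eps}) = H^2(P^m, Q^m) = 1 - (1 - H^2(P,Q))^m$, so with Corollary~\ref{cor:sllr-sc} and the classical lower bound, $\SC^{P,Q}(\sCLLR_{-\eps',\eps}) = \Theta(1/H^2(P,Q))$. The pointwise bound then reads $1 - \advg_m(\NCLLR_{-\eps',\eps}) \le 2(1 - H^2(P,Q))^m$, so $\advg_m(\NCLLR_{-\eps',\eps}) \ge 2/3$ already at $m = O(1/H^2(P,Q))$, i.e.\ $\SC^{P,Q}(\NCLLR_{-\eps',\eps}) = O(\SC^{P,Q}(\sCLLR_{-\eps',\eps}))$; combined with part (1), this is $\Theta$.

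\textbf{Main obstacle.} The sigmoid/Laplace comparison and part (2) are routine once Theorem~\ref{hellingertest} is available. The delicate point is part (1): converting ``$\advg_n \ge 1/3$'' into the definitional ``$\advg_n \ge 2/3$'' at constant cost in $n$ \emph{without} invoking the sample-complexity upper bound of Section~\ref{sec:ncllr:sc} (which is only proved afterwards). This needs (a) certifying that the geometric-decay base $\mathbb{E}_Q[e^{[\log(P/Q)]_{-\eps'}^{\eps}/2}]$ is strictly below $1$, handled by the mass identities $\int\widehat P = 1 - \tau(1-e^{-\eps'})$ and $\int\check Q = 1 - \tau(1-e^{-\eps})$; and (b) a matching geometric lower bound on the errors of $\NCLLR_{-\eps',\eps}$ — an anti-concentration estimate for a tilted product measure — which is the genuinely fiddly step.
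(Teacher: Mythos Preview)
Your pointwise comparison between $\sigma(z/2)$ and the Laplace cdf $F(z)$ is exactly the paper's core device: the paper records the inequalities $\tfrac{8}{9}h_\eps \le g_\eps \le 2h_\eps$ (with $g_\eps = \sigma(\CLLR/2)$ and $h_\eps = F(\CLLR)$), and from these concludes that the two tests' success probabilities under $P^n$ and $Q^n$ agree up to constant factors.

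Where you diverge is in how much you do afterwards. For part~(1), the paper simply observes that ``if $\NCLLR$ succeeds with probability $5/6$ then $\sCLLR$ succeeds with probability $2/3$'' and declares $\SC(\NCLLR) \ge \SC(\sCLLR)$, implicitly treating the change in success threshold as free. Your geometric-decay program (steps (i)--(iii)) tries to make this rigorous, which is more than the paper attempts; but your monotonicity sketch (i) is not correct as written---the claimed sign identity conflates the inner expectation over the fresh sample under $P$ with that under $Q$, and $\mathbb{E}_P[\sigma((Z_m+Y)/2)] \ge \sigma(Z_m/2)$ does not follow from a pointwise sign argument, since $Y$ can be negative under $P$---and you already flag (iii) as unfinished. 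For part~(2), the paper's argument is much shorter than your Hellinger detour: when the clamp is vacuous one has $\CLLR = \LLR$, so $g_\eps(X) \le h_\eps(X)$ \emph{precisely} when $P^n(X) > Q^n(X)$ (both are monotone in $\LLR$ and cross at $\LLR = 0$). Hence $(P^n(X)-Q^n(X))(g_\eps(X) - h_\eps(X)) \le 0$ pointwise, and integrating gives $\advg_m(\sCLLR) \le \advg_m(\NCLLR)$ for every $m$, i.e.\ $\SC(\NCLLR) \le \SC(\sCLLR)$ directly.
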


\begin{proof}
Recall that \[\sCLLR_{-\eps',\eps}(X) = 
\begin{cases}
P & \text{ with probability } g_{\eps}(X)\\
Q & \text{ with probability } 1-g_{\eps}(X)
\end{cases}\]
where $g_{\eps}(X) = \frac{e^{\frac{1}{2}\CLLR_{-\eps',\eps}(X)}}{1+e^{\frac{1}{2}\CLLR_{-\eps',\eps}(X)}}$. If we let the threshold $\kappa=0$ then the test based on the test statistic $\NCLLR_{-\eps',\eps}$ is 
\[\NCLLR_{-\eps',\eps}(X) = 
\begin{cases}
P & \text{ with probability } h_{\eps}(X)\\
Q & \text{ with probability } 1-h_{\eps}(X)
\end{cases}\]
where \[
h_{\eps}(X) =
    \begin{cases}
    1-\frac{1}{2}e^{-\frac{\CLLR_{-\eps',\eps}(X)}{2}} & \text{ if } \CLLR_{-\eps',\eps}(X)>0\\
    \frac{1}{2}e^{\frac{\CLLR_{-\eps',\eps}(X)}{2}} & \text{ if } \CLLR_{-\eps',\eps}(X)<0
    \end{cases}.
\]
Now, we will use the following two inequalities:
\[
  \frac{1}{2}e^{\frac{x}{2}}~\le~\frac{e^{\frac{1}{2}x}}{1+e^{\frac{1}{2}x}}~\le~2\frac{1}{2}e^{\frac{x}{2}} \text{ if } x<0 \;\; \text{ and } \;\; \frac{8}{9}(1-\frac{1}{2}e^{-\frac{x}{2}})\le\frac{e^{\frac{1}{2}x}}{1+e^{\frac{1}{2}x}}\le 1-\frac{1}{2}e^{-\frac{x}{2}} \text{ if } x>0.
\]  
Therefore, noting that 
\[
    \pr{X \sim P^n}{\sCLLR_{-\eps',\eps}(X)=P} = \mathbb{E}_{P^n}[g_{\eps}(X)] \;\; \text{ and } \;\; \pr{X \sim P^n}{\NCLLR_{-\eps',\eps}(X)=P} = \mathbb{E}_{P^n}[h_{\eps}(X)]
\]
are the probabilities of success, we have 
\[\frac{8}{9}\mathbb{E}_{P^n}[h_{\eps}(X)]\le \mathbb{E}_{P^n}[g_{\eps}(X)]\le 2 \mathbb{E}_{P^n}[h_{\eps}(X)] \;\;\text{ and }\;\; \frac{8}{9}\mathbb{E}_{Q^n}[h_{\eps}(X)]\le \mathbb{E}_{Q^n}[g_{\eps}(X)]\le 2 \mathbb{E}_{Q^n}[h_{\eps}(X)].\]
Therefore, if $\NCLLR$ has a probability of success of $5/6$ then $\sCLLR$ has a probability of success of $2/3$. This implies that $SC^{P,Q}(\NCLLR)\ge SC^{P,Q}(\sCLLR)$. 

If $\log\frac{P}{Q} \in [-\eps',\eps]$ then $\CLLR_{-\eps',\eps}(X)=\LLR(X)$ so we have $P^n(X)>Q^n(X)$ iff $\LLR(X)>0$ iff $g_{\eps}(X)\le h_{\eps}(X)$ and therefore
\begin{align*}
\mathbb{E}_{P^n}[g_{\eps}]-\mathbb{E}_{Q^n}[g_{\eps}]&= \int (P^n(X)-Q^n(X))g_{\eps}(X) \dx[X]\\
&\le \int (P^n(X)-Q^n(X))h_{\eps}(X) \dx[X]\\
&= \mathbb{E}_{P^n}[h_{\eps}]-\mathbb{E}_{Q^n}[h_{\eps}],
\end{align*}
which completes the proof.
\end{proof}

\begin{coro}\label{Noptimalimpliessoptimal}
If $\NCLLR$ has asymptotically optimal sample complexity then $\sCLLR$ has asymptotically optimal sample complexity. 
\end{coro}

\subsection{The Sample Complexity of $\NCLLR$}\label{sec:ncllr:sc}

In this section we prove the upper bound in Theorem~\ref{thm:private-hypothesis-testing} for the case where $\tvd(P,Q) < 1$ (i.e.,~the supports of $P$ and $Q$ have non-empty intersection).  This assumption ensures that $\tilde{P},\tilde{Q} \neq 0$, so that $P',Q'$ are well defined.  A proof of the case where $\tvd(P,Q)=1$ is contained in Appendix~\ref{app:disjoint:supports}.  In order to prove the upper bound in Theorem~\ref{thm:private-hypothesis-testing}, we restate it as follows.

\begin{theorem}\label{upperboundforproof}
The $\NCLLR_{-\eps',\eps}$ test is $\eps$-DP and 
\[
SC^{P,Q}(\NCLLR_{-\eps',\eps}) \le
O\left(\frac{1}{\epsilon\tau+(1-\tau)H^2(P',Q')}\right)
=O\left(\min\left\{\frac{1}{\eps\tau},
    \frac{1}{(1-\tau) H^2(P',Q')}\right\}\right) \,.
\]
\end{theorem}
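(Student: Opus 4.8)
The plan is to first dispatch privacy, then reduce the sample‑complexity bound to two separate bounds, and finally prove each. For privacy: every summand $[\log(P(x_i)/Q(x_i))]_{-\eps'}^{\eps}$ lies in $[-\eps',\eps]$, so $\CLLR_{-\eps',\eps}$ has global sensitivity at most $\eps+\eps'\le 2\eps$; hence $\CLLR_{-\eps',\eps}(X)+\Lap(2)$ satisfies $\eps$-DP by the Laplace mechanism, and $\NCLLR_{-\eps',\eps}$, being a deterministic thresholding of it, is $\eps$-DP by post-processing. For the sample complexity, since $\tfrac1{a+b}\le 2\min\{\tfrac1a,\tfrac1b\}$ and since verifying $\advg_n\ge 2/3$ for all $n$ above each of two thresholds verifies it for all $n$ above their minimum, it suffices to find a universal constant $C$ such that $\advg_n(\NCLLR_{-\eps',\eps})\ge 2/3$ whenever $n\ge C/(\eps\tau)$ and also whenever $n\ge C/((1-\tau)H^2(P',Q'))$. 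I will use throughout that $\Pr[\NCLLR_{-\eps',\eps}(X)=P]=\mathbb{E}[h(\CLLR_{-\eps',\eps}(X))]$ with $h(t)=\Pr[\Lap(2)>-t]$ increasing and $h(0)=\tfrac12$, and the mixture decomposition $P=(1-\tau)P'+\tau P''$, $Q=(1-\tau)Q'+\tau Q''$: under $P^n$ each coordinate is, independently, ``low'' (drawn from $P'$, so its clamped summand is $\log(P'(x_i)/Q'(x_i))\in[-\eps',\eps]$) with probability $1-\tau$ or ``high'' (drawn from $P''$, hence in $S_P:=\{x:P(x)>e^{\eps}Q(x)\}$, so its summand is $\eps$) with probability $\tau$; symmetrically under $Q^n$, high coordinates land in $S_Q:=\{x:Q(x)>e^{\eps'}P(x)\}$ and contribute $-\eps'$.

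\textbf{The $(1-\tau)H^2(P',Q')$ bound.} Condition on the number $K$ of low coordinates, which is $\mathrm{Bin}(n,1-\tau)$ under both hypotheses. Since $h$ is increasing and the high contributions are $\ge 0$ under $P^n$ and $\le 0$ under $Q^n$, dropping them only lowers $\mathbb{E}_{P^n}[h(\CLLR_{-\eps',\eps})]$ and raises $\mathbb{E}_{Q^n}[h(\CLLR_{-\eps',\eps})]$, so
\[
\advg_n(\NCLLR_{-\eps',\eps})\ \ge\ \mathbb{E}_K[\phi(K)],\qquad \phi(k):=\mathbb{E}_{P'^k}[h(\LLR_{P',Q'})]-\mathbb{E}_{Q'^k}[h(\LLR_{P',Q'})],
\]
the advantage on $k$ samples of the noisy (unclamped, since $\log(P'/Q')\in[-\eps',\eps]$) log‑likelihood test for $(P',Q')$. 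By the argument in the second part of Lemma~\ref{noisetosoft} applied to $(P',Q')$, $\phi(k)$ dominates the advantage of $\sLLR$ for $(P',Q')$, which by Theorem~\ref{hellingertest} equals $H^2(P'^k,Q'^k)=1-(1-H^2(P',Q'))^k\ge 1-e^{-kH^2(P',Q')}$; in particular $\phi\ge 0$, and $\phi(k)\ge 1-e^{-C/2}$ once $k\ge (C/2)/H^2(P',Q')$. If $n\ge C/((1-\tau)H^2(P',Q'))$ then $\mathbb{E}[K]\ge C/H^2(P',Q')$, so a Chernoff bound gives $K\ge (C/2)/H^2(P',Q')$ with probability $\ge 1-e^{-\Omega(C)}$, whence $\advg_n(\NCLLR_{-\eps',\eps})\ge(1-e^{-C/2})(1-e^{-\Omega(C)})\ge 2/3$ for $C$ a large enough absolute constant.

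\textbf{The $\eps\tau$ bound.} Here the high‑ratio coordinates carry the signal. With $M:=\mathcal{X}\setminus(S_P\cup S_Q)$ (on which $\log(P/Q)\in[-\eps',\eps]$), $\CLLR_{-\eps',\eps}(X)=\eps\,\#\{i:x_i\in S_P\}-\eps'\,\#\{i:x_i\in S_Q\}+\sum_{i:x_i\in M}\log(P(x_i)/Q(x_i))$. Using $\int_{S_P}(P-Q)\ge\int_{S_P}(P-e^{\eps}Q)=\tau$, the symmetric estimate on $S_Q$ afforded by the balance condition $\int_{S_Q}(Q-e^{\eps'}P)=\tau$ defining $\eps'$, and $(P-Q)\log(P/Q)\ge 0$ on $M$, the expectations of $\CLLR_{-\eps',\eps}$ under $P^n$ and $Q^n$ differ by at least $\eps\tau n$, which is $\ge C$ when $n=C/(\eps\tau)$. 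To convert this mean gap into separation I would bound the variance of $\CLLR_{-\eps',\eps}$: the high coordinates contribute $O(n\tau\eps^2)=O(C\eps)$, while the low and $M$ coordinates contribute $O(n\cdot e^{O(\eps)}H^2(P',Q'))=O(e^{O(\eps)}C)$ --- using that a log‑ratio bounded by $\eps$ makes $\mathbb{E}_{P'}[(\log(P'/Q'))^2]\le e^{O(\eps)}\chi^2(P'\|Q')\le e^{O(\eps)}H^2(P',Q')$, and that $(1-\tau)H^2(P',Q')\le\eps\tau$ in this regime. For $\eps=O(1)$ this keeps $\CLLR_{-\eps',\eps}$ within $o(C)$ of its mean under each hypothesis, so adding $\Lap(2)$ and comparing to an appropriately recentered threshold yields advantage $\ge 2/3$; the complementary range $\eps=\Omega(1)$ is handled by reduction to the non‑private case ($\tau\to 0$, Corollary~\ref{cor:sllr-sc}).

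\textbf{Main obstacle.} The delicate step is the $\eps\tau$ bound. In the Hellinger regime monotonicity of $h$ lets us simply discard the high coordinates and fall back on the test for $(P',Q')$ already understood via Theorem~\ref{hellingertest}; no such shortcut is available for the $\eps\tau$ term. There the clamped statistic is \emph{biased} --- its expectations under $P^n$ and $Q^n$ need not straddle $0$, and in boundary cases such as $\eps'=0$ or $H^2(P',Q')=0$ the test with threshold $0$ degenerates entirely --- so the comparison threshold must be re‑centered, and one must show the ``middle'' fluctuations never swamp the $\eps\tau n$ signal \emph{in exactly the parameter range where that signal is all there is}. That is where the two bounds must be shown to interlock, through the identity
\[
(1-\tau)H^2(P',Q')=\tfrac12(e^{\eps/2}-1)^2\!\int_{S_P}\!Q(x)\dx+\tfrac12(e^{\eps'/2}-1)^2\!\int_{S_Q}\!P(x)\dx+\tfrac12\!\int_M(\sqrt{P(x)}-\sqrt{Q(x)})^2\dx,
\]
together with the equivalence of $H^2$, $\mathrm{KL}$ and $\chi^2$ under a bounded log‑likelihood ratio. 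Pinning down the constants and the recentering, and dealing with the boundary cases (including $\tvd(P,Q)=1$, deferred per the excerpt to Appendix~\ref{app:disjoint:supports}), is the technical heart of the proof.
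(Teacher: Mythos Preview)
Your approach differs from the paper's. The paper carries out a \emph{single} mean--variance analysis of $\CLLR_{-\eps',\eps}$: it shows directly that
\[
\Delta_{\rm gap} := \mathbb{E}_{P^n}[\CLLR] - \mathbb{E}_{Q^n}[\CLLR] = n(1-\tau)\bigl(\mathrm{KL}(P'\|Q')+\mathrm{KL}(Q'\|P')\bigr) + n\tau(\eps+\eps') \ge 2n\bigl((1-\tau)H^2(P',Q')+\tau\eps\bigr),
\]
then bounds $\mathrm{Var}_{P^n}[\CLLR]$ and $\mathrm{Var}_{Q^n}[\CLLR]$ by $O\bigl(n\tau\eps + n(1-\tau)H^2(P',Q')\bigr)$ using $\mathbb{E}_{P'}[(\log P'/Q')^2] = O(H^2(P',Q'))$, and finishes with Chebyshev (Lemma~\ref{lem:vargap}). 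Both terms in the denominator emerge from one calculation; there is no case split and no separate ``Hellinger regime'' or ``$\eps\tau$ regime.''

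Your route is genuinely different for the Hellinger part: you drop the high-ratio contributions by monotonicity of $h$ and reduce to the noisy (hence, via the second part of Lemma~\ref{noisetosoft}, soft) LLR test on $(P',Q')$, whose advantage you read off from Theorem~\ref{hellingertest} and tensorization. This is correct and pleasant --- it works at threshold $0$ without any variance computation. For the $\eps\tau$ part you revert to mean--variance, which is exactly the paper's method but applied only in the regime $(1-\tau)H^2(P',Q')\le \eps\tau$. So in the end your argument uses both techniques where the paper uses one; the payoff is a more structural understanding of the Hellinger contribution, the cost is that the two halves have to be stitched together.

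Two corrections. First, your opening claim that it suffices to prove $\advg_n\ge 2/3$ for $n\ge C/(\eps\tau)$ \emph{and} for $n\ge C/((1-\tau)H^2)$ is stronger than what you actually establish: your $\eps\tau$ argument controls the low-ratio variance only when $(1-\tau)H^2(P',Q')\le \eps\tau$ (and you should bound that variance via the mixture decomposition $P=\tau P''+(1-\tau)P'$, which gives $n(1-\tau)\,\mathbb{E}_{P'}[(\log P'/Q')^2]$, not the domain partition you write --- the missing $(1-\tau)$ matters when $\tau$ is close to $1$). This is fine for the theorem since you only need the minimum, but the writeup should say so. Second, your fallback ``$\eps=\Omega(1)$ is handled by reduction to the non-private case ($\tau\to 0$)'' is incorrect: large $\eps$ does not force $\tau\to 0$, and Corollary~\ref{cor:sllr-sc} concerns $\sLLR$, not $\NCLLR_{-\eps',\eps}$. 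The paper simply works under $\eps\le 1$ (used in the inequality $|\log(P'/Q')|\le 3|1-\sqrt{Q'/P'}|$); your $e^{O(\eps)}$ factor is the right way to track the dependence, and for bounded $\eps$ it is harmless. The recentering issue you flag is real and is shared with the paper, which resolves it implicitly by letting Lemma~\ref{lem:vargap} pick the midpoint threshold.
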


Theorem~\ref{upperboundforproof} combined with a matching lower bound (given later in Theorem~\ref{lowerbound}) imply that $\NCLLR_{-\eps'}^{\eps}$ has asymptotically optimal sample complexity. Thus, by Corollary~\ref{Noptimalimpliessoptimal}, $\sCLLR_{-\eps'}^{\eps}$ has asymptotically optimal sample complexity.

Before proving the bound, we pause to provide some intuition for its
form. As discussed in the introduction, we can write $P$ and $Q$ as
mixtures $P= (1-\tau)P' + \tau P''$ and $Q= (1-\tau)Q' + \tau Q''$
where $P'', Q''$ have disjoint support. Now consider a thought
experiment, in which the test that must
distinguish $P$ from $Q$ using a sample of size $n$ is given, along with the sample $x$, a
list of binary labels $b_1,b_2,...,b_n$ that indicate for each record
whether it was sampled from the first component of the mixture (either
$P'$ or $Q'$), or the second component (either $P''$ or
$Q''$). Of course this can only be a though experiment---these labels are
not available to a real test.

Because the mixture weights are the same for both $P$ and $Q$,
the number of labels of each type would be distributed the same under
$P$ and under $Q$, and so the tester would be faced with two independent
testing problems: distinguishing $P''$ from $Q''$ using a sample of
size about $\tau n$, and distinguishing $P'$ from $Q'$ using a sample
of size about $(1-\tau) n$. It would suffice for the tester to solve
either of these problems.

Theorem~\ref{upperboundforproof} shows that the real tests ($\sCLLR$ and
$\NCLLR$) do as well
as the hypothetical tester that has access to the labels. The two
arguments to the minimum in the theorem statement correspond directly
to the $\eps$-DP sample complexity of distinguishing $P''$ from $Q"$
(which requires $n\tau \geq 1/\eps$) or distinguishing $P'$ from $Q'$
(which requires $n(1-\tau) \geq 1/H^2(P',Q')$).
The
proof proceeds by breaking the clamped log-likelihood ratio into
two pieces, each corresponding to one of the two mixture components
(again, this decomposition is not known to the algorithm). These two
pieces correspond to the test statistics of the optimal testers for
the two separate sub-problems in the thought experiment. We show that
the test does well at distinguishing $P$ from $Q$ as long as either of
these pieces is sufficiently informative.

On a more mechanical level, our proof of Theorem~\ref{upperboundforproof} bounds the expectation and
standard deviation of the two pieces of the test statistic. We use the
following simple lemma, which states that a test statistic $S$
performs well if the distribution of the test statistic on $P$ and
$Q$, $S(P)$ and $S(Q)$, must not overlap too much. The proof is a
simple application of Chebyshev's inequality.


\begin{lemma}[Sufficient Conditions for Noisy Test Statistics]\label{lem:vargap}
Given a function $f\from\mathcal{X}\to\R$, constant $c>0$ and $n>0$, if the test statistic $S(X)=\sum_{i}f(x_i)$ satisfies 
    \[
    \max\left\{\sqrt{\var{P^n}{S(X)}}, \sqrt{\var{Q^n}{S(X)}}\right\}\le c|\ex{P^n}{S(X)}-\ex{Q^n}{S(X)}|
    \]
    then $S$ can be used to distinguish between $P$ and $Q$ with probability of
    success 2/3 and sample complexity at most $n'=12c^2n$.

\end{lemma}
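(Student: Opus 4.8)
The plan is to run the obvious threshold test on $S$ and analyze it with Chebyshev's inequality, exploiting the fact that $S(X)=\sum_{i=1}^n f(x_i)$ is a sum of i.i.d.\ terms, so that its mean, its variance, and the gap between its means under $P^n$ and $Q^n$ all scale \emph{linearly} in the sample size; consequently the ratio ``standard deviation over mean-gap'' shrinks like $\sqrt{n/n'}$ when we move to a sample of size $n'$, and a suitable choice of $n'$ drives the Chebyshev tail below $1/3$.

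First I would dispose of the degenerate case: if $\ex{P^n}{S(X)}=\ex{Q^n}{S(X)}$, the hypothesis forces $\var{P^n}{S(X)}=\var{Q^n}{S(X)}=0$, so $S$ is almost surely a constant (the same constant) under both $P$ and $Q$ and there is nothing to prove. Otherwise, up to swapping $P$ and $Q$, assume $\mu_P \triangleq \ex{P^n}{S(X)} > \ex{Q^n}{S(X)} \triangleq \mu_Q$, set $\Delta = \mu_P-\mu_Q>0$ and $\sigma^2=\max\{\var{P^n}{S(X)},\var{Q^n}{S(X)}\}$, so the hypothesis reads $\sigma\le c\Delta$. Since the $x_i$ are i.i.d., $\mu_P = n\,\ex{P}{f}$ and $\var{P^n}{S(X)}=n\,\var{P}{f}$, and likewise under $Q$; hence on a sample of size $n'$ the analogous quantities are $\mu_P' = n'\ex{P}{f}$, $\mu_Q' = n'\ex{Q}{f}$, $\Delta' = \mu_P'-\mu_Q' = (n'/n)\Delta$, and $(\sigma')^2 = (n'/n)\sigma^2 \le (n'/n)c^2\Delta^2 = c^2(n/n')(\Delta')^2$. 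Then I would take $K_S$ with threshold $\kappa=(\mu_P'+\mu_Q')/2$, so that $\mu_P'-\kappa=\kappa-\mu_Q'=\Delta'/2$, and apply Chebyshev: $\pr{X\sim P^{n'}}{S(X)\le\kappa}\le\pr{X\sim P^{n'}}{|S(X)-\mu_P'|\ge\Delta'/2}\le (\sigma')^2/(\Delta'/2)^2\le 4c^2 n/n'$, and symmetrically $\pr{X\sim Q^{n'}}{S(X)\ge\kappa}\le 4c^2 n/n'$. Taking $n'=12c^2 n$ makes both at most $1/3$, so $K_S$ answers correctly with probability at least $2/3$ under each of $P^{n'}$ and $Q^{n'}$, which is exactly the claimed conclusion.

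\textbf{Main obstacle.} Frankly there is no real obstacle: this is a textbook second-moment argument. The only points needing care are the linear-scaling bookkeeping (every one of $\ex{}{S}$, $\var{}{S}$, and the mean-gap picks up a factor $n'$, so $\sigma/\Delta$ improves by $\sqrt{n/n'}$) and pinning down the constant $12$ so that the Chebyshev tail lands strictly below $1/3$. A cleaner packaging, if desired, is to phrase everything in terms of the empirical average $\tfrac1{n'}\sum_i f(x_i)$ and invoke the i.i.d.\ form of Chebyshev's inequality directly, which makes the $1/\sqrt{n'}$ dependence transparent and isolates the choice of $n'$.
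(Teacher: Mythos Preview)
Your proposal is correct and follows essentially the same approach as the paper: threshold at the midpoint of the means, Chebyshev's inequality, and the linear scaling of both the mean-gap and variance with sample size to convert the hypothesis at $n$ into a $1/3$ error bound at $n'=12c^2n$. The only (inconsequential) difference is that you explicitly mention the degenerate equal-means case, which the paper omits.
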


\noindent This result follows by Chebyshev's inequality. For completeness, it is proved in Appendix~\ref{vargapproof}.\smallskip

The definitions of $\tilde{P}$ and $\tilde{Q}$ lend naturally to consider a partition of the space $\cX$, depending on which quantities achieve the minimum in $\min(e^\eps Q,P)$ and $\min(e^{\eps'}P,Q)$. This partition will itself play a crucial role in both the proof of the theorem, and later in our lower bound: accordingly, define
\begin{equation}\label{eq:def:sets:STA}
\Set{S} = \setOf{x}{P(x) - e^{\eps}Q(x) > 0} \quad \textrm{and} \quad \Set{T} = \setOf{x}{Q(x) - e^{\eps'}P(x) > 0}
\end{equation}
and set $\Set{A} = \cX \setminus (\Set{S} \cup \Set{T})$.



\begin{proof}[Proof of Theorem~\ref{upperboundforproof}]
Observe that for all $x\in \Set{A}$, $\tilde{P}(x)=P(x)$ and $\tilde{Q}(x)=Q(x)$ (so that $P'(x)/Q'(x) = P(x)/Q(x)$), that for all $x\in \Set{S}$,   $\log(P'(x)/Q'(x))=\eps$ and that for all $x\in \Set{T}$, $\log(P'(x)/Q'(x))=-\eps'$. To show that the test works, we
first show that the clamped log likelihood ratio (without noise) is a useful
test statistic. In order to apply Lemma \ref{lem:vargap}, we first
calculate the difference $\Delta_{\rm gap}$ in the expectations of $\CLLR_{-\eps',\eps}$ under $P$
and $Q$. For the remainder of the proof, we omit the $-\eps',\eps$
subscript (since clamping always occurs to the same interval).
\begin{align*}
\Delta_{\rm gap}
&= \ex{P^n}{\CLLR(X)}-\ex{Q^n}{\CLLR(X)}\\
&= n\int_{\cX}(P(x)-Q(x))\log\frac{P'(x)}{Q'(x)} \dx\\
&= n(P(\Set{S})-Q(\Set{S}))\eps + n\int_{\Set{A}}(\tilde{P}(x)-\tilde{Q}(x))\log\frac{P'(x)}{Q'(x)} \dx + n(Q(\Set{T})-P(\Set{T}))\eps'\\
&= n(\tilde{P}(\Set{S})-\tilde{Q}(\Set{S})+\tau)\eps+n\int_{\Set{A}}(\tilde{P}(x)-\tilde{Q}(x))\log\frac{P'(x)}{Q'(x)} \dx + n(\tilde{Q}(\Set{T})-\tilde{P}(\Set{T})+\tau)\eps'
\end{align*}
where the last equality follows by the definition of $\tau$.
Moreover, 
\begin{align*}
n \kl{P'}{Q'}&+n \kl{Q'}{P'}\\
&= n \int_{\cX} (P'(x)-Q'(x))\log\frac{P'(x)}{Q'(x)}\dx\\
&= n (P'(\Set{S})-Q'(\Set{S}))\eps + n \int_{\Set{A}} (P'(x)-Q'(x))\log\frac{P'(x)}{Q'(x)}\dx + n (Q'(\Set{T})-P'(\Set{T})) \eps'\\
&= \frac{n}{1-\tau} \left( (\tilde{P}(\Set{S})-\tilde{Q}(\Set{S}))\eps + \int_{\Set{A}} (\tilde{P}(x)-\tilde{Q}(x))\log\frac{P'(x)}{Q'(x)} + (\tilde{Q}(\Set{T})-\tilde{P}(\Set{T})) \eps' \right)\,.
\end{align*}
Therefore
\[
\Delta_{\rm gap} = (1-\tau) n (\kl{P'}{Q'}+\kl{Q'}{P'}) + n\tau\eps +
  n\tau\eps'\ge 2n\left((1-\tau)H^2(P', Q')+\tau\eps\right)
\]
where the last inequality follows since $H^2(P, Q)\le \kl{P}{Q}$ for
any distributions $P$ and $Q$.

We now turn to bounding the variance of the noisy clamped LLR under $P$ and $Q$. Noting that $\var{P^n}{\NCLLR} = \var{P^n}{\CLLR}+8$, by Lemma \ref{lem:vargap} it suffices to show that \[\max\{\var{Q^n}{\CLLR}+8,
\var{P^n}{\CLLR}+8\}\le O(\Delta_{\rm gap}^2),\] or equivalently, \[\Delta_{\rm gap} = \Omega(1) \;\;\;\; \text{and} \;\;\;\; \max\{\var{Q^n}{\CLLR},
\var{P^n}{\CLLR}\}\le O(\Delta_{\rm gap}^2).\]
Recall that $P''$ is a distribution such that $P = \tau P''+(1-\tau)P'$ and the support of $P''$ is contained in $\Set{S}$. Thus,
\begin{align*}
\var{P^n}{\CLLR} &\le n \int_{\cX} P(x)\left(\log\frac{P'(x)}{Q'(x)}\right)^2 \dx\\
&=n \int_{\cX}(\tau P''(x) + (1-\tau)P'(x))\left(\log\frac{P'(x)}{Q'(x)}\right)^2 \dx\\
&=n \left(\tau P''(\Set{S}) \eps^2 + (1-\tau)\int_{\cX} P'(x)\left(\log\frac{P'(x)}{Q'(x)}\right)^2 \dx\right)
\end{align*}
Since $\log\frac{P'(x)}{Q'(x)}\leq\eps\leq 1$, $\lvert \log\frac{P'(x)}{Q'(x)}\rvert \leq 3\cdot \lvert 1-\sqrt{\frac{Q'(x)}{P'(x)}}\rvert$. Therefore, 
\begin{align*}
\int_{\cX} P'(x)\log^2\frac{P'(x)}{Q'(x)} \dx &\leq 9 \int_{\cX} P'(x) \left(1-\sqrt{\frac{Q'(x)}{P'(x)}}\right)^2\dx 
= 18 \cdot H^2(P', Q')\,.
\end{align*}
Also, $P''(\Set{S}) = 1$ and thus,
\begin{align*}
\var{P^n}{\CLLR} &= O(n\tau\eps^2 + (1-\tau)nH^2(P',Q'))=O(n\tau\eps+(1-\tau)nH^2(P',Q')
\end{align*}
Similarly,
\[\var{Q^n}{\CLLR}\le O(n\tau\eps'^2+(1-\tau)H^2(P', Q'))\le O(n\tau\eps+(1-\tau)nH^2(P',Q'))\]

Finally,
\begin{align*}
\max\{\var{Q^n}{\CLLR}, \var{P^n}{\CLLR}\} &= O\left(\frac{(n\tau\eps + n(1-\tau)H^2(P',Q'))^2}{n\tau\eps + n(1-\tau)H^2(P',Q')}\right)\\
&= O\left(\frac{\Delta_{\rm gap}^2}{n(\tau\eps + (1-\tau)H^2(P',Q'))}\right)
\end{align*}
Therefore, if $n \geq \frac{C}{\tau\eps +
    (1-\tau)H^2(P',Q'))}$ for some suitably large constant $C>0$, the above implies that $\max\{\var{Q^n}{\CLLR},
\var{P^n}{\CLLR}\}\le O(\Delta_{\rm gap}^2)$ and $\Delta_{\rm gap}=\Omega(1)$, which concludes our proof.
\end{proof}



\section{Lower Bound on the Sample Complexity of Private Testing}\label{lower}

We now prove the lower bound in Theorem~\ref{thm:private-hypothesis-testing}.  We do so by constructing an appropriate \emph{coupling} between the distributions $P^n$ and $Q^n$, which implies lower bounds for privately distinguishing $P^n$ from $Q^n$.
This style of analysis was introduced in~\cite{AcharyaSZ18a}, though we require a strengthening of their statement.
Specifically, the lower bound of Acharya et al.\ involves $d'_\eps(X,Y) = \eps d_H(X,Y)$, whereas we have $d_\eps(X,Y) = \min(\eps d_H(X,Y),1)$.

For $X,Y \in \mathcal{X}^n$, let $d_{H}(X,Y)$ be the Hamming distance between $X,Y$ (i.e. $| \setOf{i}{x_i \neq y_i} |$).  Given a metric $d\colon\mathcal{X}^n~\times~\mathcal{X}^n~\to~\R_{\geq 0}$ we define the Wasserstein distance $W_{d}(P,Q)$ by
\[
W_{d}(P,Q) = \inf_{\rho} \ex{(X,Y) \sim \rho}{d(X,Y)}
\]
where the $\inf$ is over all couplings $\rho$ of $P^n$ and $Q^n$. Let $d_{\eps}(X,Y) = \min\{ \eps d_{H}(X,Y), 1\}$.

\begin{lem}\label{dpexpectations}
For every $\eps$-DP algorithm $M\from\mathcal{X}^n\to\zo$, if $X$ and $Y$ are neighboring datasets then \[\mathbb{E}[M(X)]\le e^{\eps}\mathbb{E}[M(Y)],\] where the expectations are over the randomness of the algorithm $M$.
\end{lem}

\begin{lem} \label{lem:dp->wasserstein}
  For every $\eps$-DP algorithm $M \from \mathcal{X}^n \to \{P,Q\}$ the advantage satisfies \[\left|\pr{X \sim P}{M(X)=P}-\pr{X \sim Q}{M(X)=P}\right| \leq O(W_{d_{\eps}}(P,Q))\]
\end{lem}

\begin{proof}
Let $\rho\from\mathcal{X}^n\times\mathcal{X}^n\to\R_{\geq 0}$ be a coupling of $P^n$ and $Q^n$ and $M$ be an $\eps$-DP algorithm. We have 
\begin{align*}
  \pr{X \sim P}{M(X)=P} - \pr{X \sim Q}{M(X)=P}&=\int_{\mathcal{X}^n}\int_{\mathcal{X}^n} \left( \rho(X,Y)\pr{M}{M(X)=P}-\rho(X,Y)\pr{M}{M(Y)=P} \right)\dx[X]\dx[Y]\\
  &\le \int_{\mathcal{X}^n}\int_{\mathcal{X}^n} \rho(X,Y)\min\{1,(e^{\eps d_H(X,Y)}-1)\pr{M}{M(Y)=P}\}\dx[X]\dx[Y]\\
&\le 2\int_{\mathcal{X}^n}\int_{\mathcal{X}^n} \rho(X,Y)\min\{1,\eps d_H(X,Y)\}\dx[X]\dx[Y]\\
&= O(\ex{(X,Y) \sim \rho}{d_{\eps}(X,Y)}),
\end{align*}
  where $\pr{M}{\cdot}$ denotes that the probability is over the randomness of the algorithm $M$, and the first inequality follows from Lemma \ref{dpexpectations}.
\end{proof}

The upper bound in Lemma \ref{lem:dp->wasserstein} is in fact tight. We state the converse below for completeness, although we will not use it in this work. The proof of Lemma \ref{lem:wasserstein->dp} is contained in Appendix~\ref{sec:wasserstein->dp}.

\begin{lem} \label{lem:wasserstein->dp}
There is a $\eps$-DP algorithm $M \from \mathcal{X}^n \to \{P,Q\}$ such that
  $$\left|\pr{X \sim P}{M(X)=P}-\pr{X \sim Q}{M(X)=P}\right| \geq \Omega(W_{d_{\eps}}(P,Q)).$$
\end{lem}

We will also rely on the following standard fact characterizing total variation distance in terms of couplings:
\begin{fact}
Given $P$ and $Q$, there exists a coupling $\rho$ of $P$ and $Q$ such that $\pr{\rho}{X\neq Y}=\tvd(P,Q)$. We will refer to this coupling as the \emph{total variation coupling} of $P$ and $Q$.
\end{fact}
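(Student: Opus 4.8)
The plan is to exhibit the coupling explicitly; it is the classical \emph{maximal coupling}. First I would fix a common dominating measure $\mu$ (for instance $\mu=\tfrac12(P+Q)$) and write $p=dP/d\mu$, $q=dQ/d\mu$, and $\delta=\tvd(P,Q)$. The only facts I would use are that $\delta=\int (p-q)^+\,d\mu=\int (q-p)^+\,d\mu$ (each equals $\tfrac12\int|p-q|\,d\mu$, since $\int(p-q)\,d\mu=0$) and that the overlap density $r=\min(p,q)$ has mass $\int r\,d\mu=1-\delta$. Then I would define $\rho$ by a two-stage procedure: flip a coin that reads ``shared'' with probability $1-\delta$ and ``split'' with probability $\delta$; on ``shared'' draw $X$ from the density $r/(1-\delta)$ and set $Y=X$; on ``split'' draw $X$ from $(p-q)^+/\delta$ and, independently, $Y$ from $(q-p)^+/\delta$. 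Formally this $\rho$ is $(1-\delta)$ times the pushforward of $r/(1-\delta)$ under the diagonal map $x\mapsto(x,x)$ plus $\delta$ times a product measure, so it is an honest probability measure on $\mathcal{X}\times\mathcal{X}$. (If $\delta\in\{0,1\}$ one branch is vacuous and the claim is immediate: for $\delta=0$, $X=Y\sim P=Q$; for $\delta=1$, $\rho=P\otimes Q$ and $X\neq Y$ always.)

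Two verifications would then finish the argument. For the marginals: the $X$-marginal has density $(1-\delta)\cdot\frac{r}{1-\delta}+\delta\cdot\frac{(p-q)^+}{\delta}=\min(p,q)+(p-q)^+$, which equals $p$ pointwise (if $p\le q$ the second term vanishes; if $p>q$ the two terms sum to $q+(p-q)=p$), and symmetrically the $Y$-marginal density is $q$, so $\rho$ is a coupling of $P$ and $Q$. For the disagreement probability: on the ``shared'' branch $X=Y$ by construction, while on the ``split'' branch $X$ is supported on $\{p>q\}$ and $Y$ on $\{q>p\}$, which are disjoint, so $X\neq Y$ $\rho$-almost surely there; hence $\pr{\rho}{X\neq Y}=\delta=\tvd(P,Q)$.

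For completeness I would also record that $\delta$ is optimal over all couplings — which is what promotes the statement from an inequality to the exact identity used later: for any coupling $\rho'$ and any measurable $A$, $\pr{\rho'}{X\in A}-\pr{\rho'}{Y\in A}=\pr{\rho'}{X\in A,\,Y\notin A}-\pr{\rho'}{X\notin A,\,Y\in A}\le\pr{\rho'}{X\neq Y}$, and taking the supremum over $A$ gives $\tvd(P,Q)\le\pr{\rho'}{X\neq Y}$, so the construction above is tight. There is no genuine obstacle here — the fact is folklore — and the only mild care needed is the measure-theoretic bookkeeping for non-discrete $\mathcal{X}$: producing a bona fide joint law on $\mathcal{X}\times\mathcal{X}$ (handled by the mixture-of-pushforward-and-product description) and checking that $(p-q)^+\,d\mu$ and $(q-p)^+\,d\mu$ sit on disjoint sets, both of which follow from the pointwise comparison of $p$ and $q$ used above.
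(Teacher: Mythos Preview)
Your construction is the standard maximal coupling and is correct; the paper itself states this fact without proof, treating it as folklore, so there is no ``paper's own proof'' to compare against. Your write-up goes beyond what the paper requires by also verifying optimality, which is a nice touch but not needed for the downstream use in Theorem~\ref{lowerbound}.
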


We can now prove the lower bound component of Theorem \ref{thm:private-hypothesis-testing}. Recall that $P'$ and $Q'$ were defined in Equation \eqref{def:primedistributions}.

\begin{theorem}\label{lowerbound}
Given $P$ and $Q$, every $\eps$-DP test $K$ that distinguishes $P$ and $Q$ has the property that \[SC^{P,Q}(K)=\Omega\left( \frac{1}{\eps\tau+H^2(P',Q')}\right).\] 
\end{theorem}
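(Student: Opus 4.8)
The plan is to build a single coupling $\rho$ of $P^n$ and $Q^n$ whose expected $d_\eps$-cost is $O(n(\eps\tau + H^2(P',Q')))$, and then invoke Lemma~\ref{lem:dp->wasserstein}: any $\eps$-DP test has advantage $O(W_{d_\eps}(P,Q)) = O(n(\eps\tau + H^2(P',Q')))$, so to reach advantage $2/3$ one needs $n = \Omega(1/(\eps\tau + H^2(P',Q')))$. The coupling is constructed to mirror the mixture decomposition $P = (1-\tau)P' + \tau P''$, $Q = (1-\tau)Q' + \tau Q''$, where $P'',Q''$ have disjoint support. Sample a pair $(X,Y)$ coordinate-wise in two stages. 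First, think of each coordinate of $X\sim P^n$ as carrying a hidden label "from $P'$" (probability $1-\tau$) or "from $P''$" (probability $\tau$), and similarly for $Y$; since the mixture weight $\tau$ is shared, we can couple the label patterns to agree exactly. On the coordinates labeled $P''$ versus $Q''$ — about $\tau n$ of them in expectation — the two samples are drawn from disjoint-support distributions, so those coordinates simply differ; this contributes roughly $\tau n$ to $d_H$, hence at most $O(\eps\tau n)$ to $d_\eps$ (here it matters that we use $\min(\eps d_H, 1)$ rather than $\eps d_H$ so this term does not blow up — but in fact for the lower bound the truncation only helps us, and on this block $\eps\tau n$ is already the right order up to the overall $\min$ with $1$).

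On the remaining $m \approx (1-\tau)n$ coordinates, which are distributed as $(P')^m$ versus $(Q')^m$, apply the total variation coupling of $P'$ and $Q'$ coordinatewise; by the standard fact, each such coordinate agrees with probability $1 - \tvd(P',Q')$, so the expected number of disagreements is $m\cdot\tvd(P',Q')$. Since $\tvd(P',Q') \le \sqrt{2}\,H(P',Q')$ is too lossy, I would instead use the sharper route: couple $(P')^m$ and $(Q')^m$ by their joint total variation coupling, for which $\Pr[X\neq Y] = \tvd((P')^m,(Q')^m) = 1 - (1-\tvd(P',Q'))^{?}$ — actually the clean bound is $\tvd((P')^m,(Q')^m) \le 1 - \prod(1-H^2) \le m H^2(P',Q')$ via $\tvd \le H\sqrt{2-H^2}$... let me keep it simple: on this block the expected Hamming distance under the optimal coupling is $O(m\, H^2(P',Q'))$ using the relation between TV and squared Hellinger on product distributions (subadditivity of $1-\prod(1-H^2_i)$), giving an $O((1-\tau) n H^2(P',Q'))$ contribution to $d_\eps$. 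Summing the two contributions, $\mathbb{E}_{(X,Y)\sim\rho}[d_\eps(X,Y)] = O(\eps\tau n + (1-\tau)nH^2(P',Q')) = O(n(\eps\tau + H^2(P',Q')))$, and then Lemma~\ref{lem:dp->wasserstein} closes the argument as above; the $(1-\tau)$ factor can be absorbed since $\tau \le 1$ and the claimed bound has $H^2(P',Q')$ without it, which is weaker.

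\textbf{Main obstacle.} The delicate point is making the two-stage coupling of the \emph{hidden labels} precise: $P$ and $Q$ are genuine mixtures with the \emph{same} weight $\tau$, but the components $P', P''$ (resp. $Q', Q''$) are not the conditional distributions one would get by any canonical splitting of a single sample, so one has to argue that there is a coupling of $P$ and $Q$ that first matches a Bernoulli($\tau$) "which component" variable and then, conditioned on "first component", applies the TV coupling of $P'$ and $Q'$, while conditioned on "second component" pairs $P''$ with $Q''$ arbitrarily. This is valid because $P = (1-\tau)P' + \tau P''$ literally exhibits $P$ as such a mixture, so the described procedure has the right marginals; the only care needed is to verify the marginal computation and that the per-coordinate couplings tensorize correctly to a coupling of $P^n$ and $Q^n$. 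Once that bookkeeping is in place, everything else is a direct application of the earlier lemmas and standard TV/Hellinger inequalities. A secondary (minor) obstacle is handling the edge cases $\tau = 1$ (disjoint supports, where the bound is trivial since $\tvd(P,Q)=1$ forces $\Omega(1/\eps)$ via the $\eps\tau$ term) and $\tvd(P',Q') $ near $1$, but these do not affect the asymptotics.
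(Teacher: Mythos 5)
Your overall architecture is the paper's: decompose $P=(1-\tau)P'+\tau P''$ and $Q=(1-\tau)Q'+\tau Q''$ with $P'',Q''$ of disjoint support, couple the component labels coordinatewise, charge each $P''/Q''$ coordinate $\eps$ (giving the $\eps\tau n$ term), handle the $P'/Q'$ coordinates with the total variation coupling of the \emph{product} distributions, and finish with Lemma~\ref{lem:dp->wasserstein}. The bookkeeping you flag as the ``main obstacle'' is indeed routine and is carried out explicitly in the paper (points of $\Set{S}$ are kept with probability $e^\eps Q(x_i)/P(x_i)$ and otherwise resampled from $(Q-e^{\eps'}P)/\tau$ on $\Set{T}$, so that the kept coordinates are exactly $(P')$-distributed).

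The genuine gap is the bound you assert for the $P'$-block. The inequality $\tvd\left((P')^m,(Q')^m\right)\le m\,H^2(P',Q')$ is false: already for $m=1$, take $P',Q'$ with $\tvd(P',Q')=\delta$ and $H^2(P',Q')=\Theta(\delta^2)$ (e.g.\ Bernoullis with parameters $1/2$ and $1/2+\delta$); the left side is $\delta$ and the right side is $\Theta(\delta^2)$. The subadditivity you invoke, $1-\prod_i(1-H^2_i)\le mH^2$, controls the \emph{squared Hellinger distance} of the product, and converting that to total variation costs a square root; the correct chain is $\tvd((P')^m,(Q')^m)\le\sqrt{2}\,H((P')^m,(Q')^m)\le\sqrt{2m}\,H(P',Q')$. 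Consequently your stated goal --- a coupling of $P^n,Q^n$ with expected $d_\eps$-cost $O(n(\eps\tau+H^2(P',Q')))$ --- is not achievable in general: in the Bernoulli example with $mH^2\approx 0.01$, even the optimal coupling of the $P'$-block alone costs about $\sqrt{mH^2}=0.1$. This does not sink the theorem. With the correct bound one gets advantage at most $O\left(\eps\tau n+\sqrt{(1-\tau)n}\,H(P',Q')\right)$, and requiring this to be $\Omega(1)$ still forces $n=\Omega\left(\min\left\{\tfrac{1}{\eps\tau},\,\tfrac{1}{(1-\tau)H^2(P',Q')}\right\}\right)$, which is exactly how the paper closes the argument; you just need to replace the $mH^2$ claim by the $\sqrt{m}\,H$ bound and square only at the very last step.
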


\begin{proof}
Consider the following coupling of $P^n$ and $Q^n$: Given a sample $X\sim P^n$, independently for all $x_i\in \Set{S}$\footnote{Recall the definitions of $\Set{S},\Set{T}$ and $\Set{A}$ from Section~\ref{sec:ncllr:sc} (Equation~\eqref{eq:def:sets:STA}).} label the point $\mathbbm{1}$ with probability 
$\frac{e^{\eps}Q(x_i)}{P(x_i)}$, otherwise label it $\mathbbm{2}$. Label all the points in $\Set{A}\cup \Set{T}$ as $\mathbbm{1}$. (In particular, this implies that each $x_i$ is labeled $\mathbbm{1}$ with probability $1-\tau$, and $\mathbbm{2}$ with probability $\tau$.)  Each point labeled $\mathbbm{2}$ is then independently re-sampled from $\Set{T}$ with probability distribution $\frac{Q-e^{\eps'}P}{\tau}\mathbbm{1}_{\Set{T}}$. Let $\Lambda\subseteq[n]$ be the set of points labeled $\mathbbm{1}$, and $n'$ be its size; and note that this set is distributed according to $(P')^{n'}$. We transform this set to a set distributed by $(Q')^{n'}$ using the TV-coupling of $(P')^{n'}$ and $(Q')^{n'}$. The result is a sample from $Q^n$.

Now, we can rewrite
\[
    d_H(X,Y) 
    = \sum_{i\notin\Lambda} \mathbbm{1}_{\{X_i=Y_i\}} + \sum_{i\in\Lambda} \mathbbm{1}_{\{X_i=Y_i\}}
    = d_H(X_{\bar{\Lambda}},Y_{\bar{\Lambda}}) + \mathbbm{1}_{\{X_\Lambda=Y_\Lambda\}}\cdot \sum_{i\in\Lambda} \mathbbm{1}_{\{X_i=Y_i\}}
\] 
and therefore
\begin{align*}
\ex{}{\min\{\eps d_H(X,Y),1\}}
 &= \ex{}{\min\{\eps d_H(X,Y),1\}\mathbbm{1}_{\{X_\Lambda=Y_\Lambda\}}} + \ex{}{\min\{\eps d_H(X,Y),1\}\mathbbm{1}_{\{X_\Lambda\neq Y_\Lambda\}}} \\
 &\leq \eps\ex{}{d_H(X,Y)\mathbbm{1}_{\{X_\Lambda=Y_\Lambda\}}} + \ex{}{\mathbbm{1}_{\{X_\Lambda\neq Y_\Lambda\}}}\\
 &= \eps\ex{}{d_H(X_{\bar{\Lambda}},Y_{\bar{\Lambda}})} + \pr{}{X_\Lambda\neq Y_\Lambda}\,.
\end{align*}
Recalling now that the distribution of  $(X_\Lambda,Y_\Lambda)$ is that of the TV-coupling of $(P')^{n'}$ and $(Q')^{n'}$, and that $\lvert \bar{\Lambda}\rvert = n-n'$, we get
\begin{align*}
\ex{}{\min\{\eps d_H(X,Y),1\}}&\le \ex{}{\eps (n-n')+\tvd( (P')^{n'},(Q')^{n'} ) }\\
&\le \eps \tau n+\ex{}{\sqrt{n'}}H(P',Q')\\
&\le \eps\tau n +\sqrt{(1-\tau)n} \cdot H(P',Q') 
\end{align*}
Therefore, by Lemma \ref{lem:dp->wasserstein}, we have that for every $\eps$-DP test $M$, 
  \[\left|\pr{X \sim P}{M(X)=P}-\pr{X \sim Q}{M(X)=P}\right|\le\eps\tau n +\sqrt{(1-\tau)n} \cdot H(P',Q').\]
Thus, in order for the probability of success to be $\Omega(1)$, we need either $\eps\tau n$ or $\sqrt{(1-\tau)n}H(P',Q')$ to be $\Omega(1)$. That is, $n\ge\Omega\left(\min\left\{\frac{1}{\eps\tau}, \frac{1}{(1-\tau)H^2(P',Q')}\right\}\right)$.
\end{proof}

\section{Application: Differentially Private Change-Point Detection}
\label{sec:cpd}
In this section, we give an application of our method to differentially private change-point detection.
In the change-point detection problem, we are given a time-series of data.
Initially, it comes from a known distribution $P$, and at some unknown time step, it begins to come from another known distribution $Q$.
The goal is to approximate when this change occurs.
More formally, we have the following definition.
\begin{definition}
In the \emph{offline change-point detection problem}, we are given distributions $P,Q$ and a data set $X = \{x_1, \dots, x_n\}$.
We are guaranteed that there exists $k^\ast \in [n]$ such that $x_1, \dots, x_{k^\ast-1} \sim P$ and $x_{k^\ast}, \dots, x_n \sim Q$.
The goal is to output $\hat k$ such that $|\hat k - k^\ast|$ is small. 

In the \emph{online change-point detection problem}, we are given distributions $P,Q$, a stream of data points $X = \{x_1, \dots\}$.
We are guaranteed that there exists $k^\ast$ such that $x_1, \dots, x_{k^\ast-1} \sim P$ and $x_{k^\ast}, \dots \sim Q$.
The goal is to output $\hat k$ such that $|\hat k - k^\ast|$ is small. 
\end{definition}

We study the parameterization of the private change-point detection problem recently introduced by Cummings {\em et al.}~\cite{CummingsKMTZ18}.
\begin{definition} [Change-Point Detection]
An algorithm for a (online) change-point detection problem is $(\alpha,\beta)$-accurate if for any input dataset (data stream), with probability at least $1 - \beta$ outputs a $\hat k$ such that $|\hat k - k^\ast| \leq \alpha$, where the probability is with respect to the randomness in the sampling of the data set and the random choices made by the algorithm.
\end{definition}
\jnote{I took out the part of the def about DP, since it seems obvious what it means for the algorithm to be DP, and is anyway orthogonal to the definition of accuracy.}

Our main result is the following:
\begin{theorem}
\label{thm:cpd}
There exists an efficient $\ve$-differentially private and $(\alpha, \beta)$-accurate algorithm for offline change-point detection from distribution $P$ to $Q$ with 
$$\alpha =  \Theta\left(\frac{1}{\eps \tau(P,Q) + H^2(P',Q')} \cdot \log (1/\beta)\right).$$

Furthermore, there exists an efficient $\ve$-differentially private and $(\alpha, \beta)$-accurate algorithm for online change-point detection from distribution $P$ to $Q$ with the same value of $\alpha$.  This latter algorithm also requires as input a value $n$ such that $n = \Omega\left(\SC^{P,Q}_{\eps} \cdot \log\left(\frac{k^\ast}{n \beta}\right)\right)$.
If the algorithm is accurate, it will observe at most $k^\ast + 2n$ data points, and with high probability observe $k^\ast + O(n \log n)$ data points.
\end{theorem}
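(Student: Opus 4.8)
The plan is a black-box reduction from change-point detection to private simple hypothesis testing, instantiated with one of the optimal tests of Theorem~\ref{thm:private-hypothesis-testing}. Fix an $\eps$-DP test $T$ (either $\sCLLR_{-\eps',\eps}$ or $\NCLLR_{-\eps',\eps}$) taking $m = \Theta(\SC^{P,Q}_{\eps})$ samples, so that $\Pr_{P^m}[T=P]\ge 2/3$ and $\Pr_{Q^m}[T=Q]\ge 2/3$ (this follows from $\advg_m(T)\ge 2/3$ and $\Pr\in[0,1]$). For the offline problem, partition the stream into consecutive blocks $B_1,B_2,\dots$ of $m$ points, run $T$ independently on each block to get votes $v_j=T(B_j)\in\{P,Q\}$, and output $\hat k = m\hat\jmath$, where $\hat\jmath\in\arg\min_j\big(\lvert\{j'>j: v_{j'}=P\}\rvert+\lvert\{j'\le j: v_{j'}=Q\}\rvert\big)$ is the ``best step location.'' Privacy is immediate: neighboring datasets differ in one coordinate, hence in one block, so parallel composition makes the vector of votes $\eps$-DP, and $\hat k$ is a post-processing of it. (The case $\tvd(P,Q)=1$, where $P',Q'$ are undefined, is handled as in the corresponding appendix for Theorem~\ref{upperboundforproof}.)

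For offline accuracy, let $j^\circ$ be the block containing $k^*$; then blocks strictly before $j^\circ$ are i.i.d.\ from $P$ and those strictly after from $Q$. Moving the candidate from $j^\circ$ to $j^\circ+t$ changes the objective by $\sum_{j'=j^\circ+1}^{j^\circ+t}(\ind{v_{j'}=Q}-\ind{v_{j'}=P})$, a sum of $t$ independent $\{\pm1\}$ variables of mean at least $1/3$; a Chernoff bound gives $\Pr[\mathrm{Obj}(j^\circ+t)\le\mathrm{Obj}(j^\circ)]\le e^{-\Omega(t)}$, and summing this geometric tail over $t\ge T$ yields $\Pr[\hat\jmath\ge j^\circ+T]\le e^{-\Omega(T)}$; the symmetric bound (using $P$-blocks to the left) holds for $\hat\jmath\le j^\circ-T$. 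Hence $|\hat\jmath-j^\circ|=O(\log(1/\beta))$ except with probability $\beta$, i.e.\ $|\hat k-k^*|=O(m\log(1/\beta))=O(\SC^{P,Q}_{\eps}\log(1/\beta))$. The key point is that the bound depends on $\log(1/\beta)$ and \emph{not} on $n$: we never union-bound over all candidate locations: the summability of the Chernoff tail does this automatically. Finally, since $\lvert\log(P'/Q')\rvert\le\max(\eps,\eps')$ forces $H^2(P',Q')=O(\eps)$, one has $\eps\tau+(1-\tau)H^2(P',Q')=\Theta(\eps\tau+H^2(P',Q'))$, so $m=\Theta(\SC^{P,Q}_{\eps})$ matches the stated $\alpha$ up to constants.

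The online algorithm processes the stream block by block, feeding each completed block of $m$ points to a fresh copy of $T$ and maintaining a running count of recent votes; it declares a change and halts as soon as some window of the last $w=\Theta(\log(k^*/(n\beta)))$ votes is overwhelmingly ``$Q$,'' then runs the offline step-location estimator on the observed blocks to produce $\hat k$. Parallel composition over the disjoint blocks again gives $\eps$-DP, the data-dependent halting time being itself a post-processing of the votes. Choosing $n=\Omega(mw)=\Omega(\SC^{P,Q}_{\eps}\log(k^*/(n\beta)))$ ensures, by a Chernoff bound and a union bound over the at most $k^*/m$ windows lying entirely before $k^*$, that we do not stop prematurely; and once $\Theta(w)$ blocks past $k^*$ the recent window is all-$Q$ with high probability, so when the run is accurate we halt after at most $k^*+O(mw)=k^*+2n$ points, while accounting for the rare blocks on which $T$ errs costs an extra $O(n\log n)$ observed points with high probability. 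Accuracy of $\hat k$ then follows as in the offline analysis applied to the observed window (which contains $\Omega(\log(1/\beta))$ blocks on each side of $j^\circ$). The matching lower bound $\alpha=\Omega(\SC^{P,Q}_{\eps}\log(1/\beta))$ follows by reduction: an $(\alpha,\beta)$-accurate $\eps$-DP detector run on a stream that is either entirely $P$ (so $k^*$ is at the end) or switches to $Q$ immediately yields an $\eps$-DP test distinguishing $P^{O(\alpha)}$ from $Q^{O(\alpha)}$, plus the standard confidence-boosting argument for the $\log(1/\beta)$ factor.

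The main obstacle is the online construction: one must exhibit a single stopping rule that stays $\eps$-DP even though it is in effect queried an a priori unbounded number of times along the stream (handled by the disjoint-block/parallel-composition structure, which incurs no accumulating privacy cost and needs no sparse-vector budget) while \emph{simultaneously} avoiding false alarms over all $\approx k^*/m$ pre-change windows (this is exactly what forces the $\log(k^*/(n\beta))$ blow-up in the required $n$) and still reacting within $O(n)$ points of the true change. By contrast, once the ``run the private test on disjoint blocks, then post-process'' viewpoint is fixed, the offline reduction and its $\log(1/\beta)$ accuracy are essentially automatic.
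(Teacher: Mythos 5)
Your proposal is correct and follows the same architecture as the paper's proof: partition the data into disjoint blocks of size $\Theta(\SC^{P,Q}_{\eps})$, run the optimal private test of Theorem~\ref{thm:private-hypothesis-testing} on each block (privacy by parallel composition plus post-processing), and thereby reduce to a change-point problem on biased $\pm 1$ votes; your step-location objective is, up to an additive constant, the paper's $\ell(t)=\sum_{j\ge t}z_j$, and your online stopping rule is the same batched majority-vote scheme with a union bound over the $O(k^\ast/n)$ pre-change windows, yielding the same $n=\Omega(\SC^{P,Q}_{\eps}\log(k^\ast/(n\beta)))$ requirement and the $k^\ast+2n$ observation count. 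The one place you genuinely diverge is the concentration step that keeps the offline error at $O(\log(1/\beta))$ independently of $n$: the paper controls the entire trajectory $\ell(k^\ast+1)-\ell(k^\ast+1+i)$ via a finite-time law-of-the-iterated-logarithm maximal inequality (Theorem~4 of~\cite{Balsubramani15} applied to the centered martingale), whereas you apply a pointwise Chernoff bound at each offset $t$ and sum the resulting geometric tail over $t\ge T$, so that no union bound over all $n$ candidate positions is ever needed. Both arguments are valid and give the same $e^{-\Omega(T)}$ conclusion; yours is more elementary and self-contained, while the paper's LIL route buys a sharper simultaneous $O(\sqrt{i(\log\log i+\log(1/\beta))})$ deviation bound that is not actually needed for the stated result. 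Your side remarks (that $\eps\tau+(1-\tau)H^2(P',Q')=\Theta(\eps\tau+H^2(P',Q'))$ because $H^2(P',Q')=O(\eps^2)$, and that the $\tvd(P,Q)=1$ case is handled separately) correctly patch small points the paper leaves implicit.
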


For constant $\beta$, the accuracy of our offline algorithm is optimal up to constant factors, since one can easily show that the best accuracy achievable is $\Omega(\SC_{\eps}^{P,Q})$.
A similar statement holds for our online algorithm when the algorithm is given an estimate $n$ of $k^\ast$ such that $n = \poly(k^\ast)$.

As one might guess, this problem is intimately related to the hypothesis testing question studied in the rest of this paper. 
Indeed, our change-point detection algorithm will use the hypothesis testing algorithm of Theorem~\ref{thm:private-hypothesis-testing} as a black box, in order to reduce to a simpler Bernoulli change-point detection problem (see Lemma~\ref{lem:reduction} in Section~\ref{sec:reduction-cpd}).
We then give an algorithm to solve this simpler problem (Lemma~\ref{lem:bernoulli-cpd} in Section~\ref{sec:bernoulli-cpd}), completing the proof of Theorem~\ref{thm:cpd}.
In Section~\ref{sec:identity-cpd}, we show that our reduction is applicable more generally, as we describe an algorithm change-point detection in a goodness-of-fit setting.

\subsection{A Reduction to Bernoulli Change-Point Detection}
\label{sec:reduction-cpd}
In this section, we provide a reduction from private change-point detection with arbitrary distributions to non-private change-point detection with Bernoulli distributions.
\begin{lemma}
\label{lem:reduction}
Suppose there exists an $(\alpha, \beta)$-accurate algorithm which can solve the following restricted change-point detection problem: we are guaranteed that there exists $\tilde k^\ast$ such that $z_1, \dots, z_{\tilde k^\ast-1} \sim 2 \operatorname{Ber}(\tau_0) - 1$ for some $\tau_0 > 2/3$, $z_{\tilde k^\ast+1} , \dots \sim 2\operatorname{Ber}(\tau_1) - 1$ for some $\tau_1 < 1/3$, and $z_{\tilde k^\ast} \in \{\pm 1\}$ is arbitrary.

Then there exists an $\ve$-differentially private and $((\alpha + 1) \cdot \SC^{P,Q}_{\eps} ,\beta)$-accurate algorithm which solves the change-point detection problem, where $\SC^{P,Q}_{\eps}$ is as defined in Theorem~\ref{thm:private-hypothesis-testing}.
\end{lemma}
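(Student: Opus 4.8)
The plan is to reduce to the restricted Bernoulli change-point problem by a blocking argument. I would fix a block length $m = \Theta(\SC^{P,Q}_{\eps})$ and partition the stream into consecutive blocks $B_1, B_2, \dots$, where $B_j$ holds the data points with indices in $\{(j-1)m+1,\dots,jm\}$. On each block I run the optimal $\eps$-DP test $K$ guaranteed by Theorem~\ref{thm:private-hypothesis-testing} and record $z_j = +1$ if $K(B_j)=\sayP$ and $z_j = -1$ if $K(B_j)=\sayQ$. Writing $\tilde k^\ast = \lceil k^\ast/m\rceil$ for the index of the block containing the true change-point, I would feed the sequence $z = (z_1,z_2,\dots)$ into the assumed $(\alpha,\beta)$-accurate algorithm for the restricted problem, receive an estimate $\hat{\tilde k}$, and output $\hat k = (\hat{\tilde k}-1)m+1$, the first index of block $\hat{\tilde k}$.

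First I would argue privacy. Each data point $x_i$ lies in exactly one block, so changing it affects only the single coordinate $z_j$ with $i\in B_j$, and the coordinates $z_j$ are mutually independent given the data. Since $K$ is $\eps$-DP, on two neighboring streams the joint law of $(z_1,z_2,\dots)$ changes by at most a factor $e^\eps$ on every event, because only one factor of the product over blocks changes. As $\hat k$ is a fixed post-processing of $z$, the whole algorithm is $\eps$-DP. This is parallel composition followed by post-processing; in particular the restricted-problem algorithm itself need not be private, since it only ever sees the already-private vector $z$.

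Next I would check that $z$ satisfies the promise of the restricted problem. For $j < \tilde k^\ast$, $B_j$ consists of $m$ i.i.d.\ samples from $P$, so the $z_j$ are i.i.d.\ with $\mathbb{P}[z_j=+1] = \mathbb{P}_{x\sim P^m}[K(x)=\sayP] =: \tau_0$; for $j > \tilde k^\ast$ they are i.i.d.\ with $\mathbb{P}[z_j=+1] = \mathbb{P}_{x\sim Q^m}[K(x)=\sayP] =: \tau_1$; and $z_{\tilde k^\ast}$, computed from a block that mixes $P$- and $Q$-samples, is an arbitrary element of $\{\pm 1\}$. From $\advg_m(K)\ge 2/3$ we get $\tau_0 \ge \tau_1 + 2/3$, hence $\tau_0 \ge 2/3$ and $\tau_1 \le 1/3$; to obtain the strict inequalities the restricted problem demands, I would take the constant in $m = \Theta(\SC^{P,Q}_{\eps})$ large enough — equivalently, run a constant number of independent copies of $K$ on disjoint sub-blocks of size $\SC^{P,Q}_{\eps}$ (each correct with probability $>1/2$) and take the majority — so that $\tau_0 > 2/3$ and $\tau_1 < 1/3$. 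Then $z$ is a legal input to the restricted-problem algorithm with change-point $\tilde k^\ast$, and by hypothesis it returns $\hat{\tilde k}$ with $|\hat{\tilde k}-\tilde k^\ast|\le\alpha$ with probability at least $1-\beta$; the randomness of the $z_j$ is already folded into that guarantee, so no extra union bound is incurred.

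To finish I would do the elementary error conversion: on the good event, $k^\ast$ lies in block $\tilde k^\ast$, so $|k^\ast - ((\tilde k^\ast-1)m+1)|\le m-1$, while $|\hat k - ((\tilde k^\ast-1)m+1)| = m|\hat{\tilde k}-\tilde k^\ast|\le\alpha m$, which gives $|\hat k - k^\ast|\le(\alpha+1)m = O((\alpha+1)\SC^{P,Q}_{\eps})$, matching the claimed accuracy. I expect the one genuinely delicate point to be the strict separation $\tau_1 < 1/3 < 2/3 < \tau_0$: the definition of sample complexity only promises advantage $\ge 2/3$, which a priori allows $\tau_0 = 2/3$, so the proof must either use the constant-factor boosting above or invoke the explicit advantage formula for the optimal test from Section~\ref{sec:hellinger}; either way this touches only the hidden constant, which is absorbed into the $\Theta(\cdot)$ of Theorem~\ref{thm:cpd}. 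Minor bookkeeping — a possibly short final block, and emitting the $z_j$ to an online routine as blocks fill up in the online variant — does not affect the argument.
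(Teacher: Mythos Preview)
Your proposal is correct and follows essentially the same approach as the paper: block the stream into intervals of length $\Theta(\SC^{P,Q}_\eps)$, run the private test of Theorem~\ref{thm:private-hypothesis-testing} on each block to produce the $z_j$'s, argue privacy via parallel composition and post-processing, and convert the block-level error back to the original scale. You are in fact slightly more careful than the paper about enforcing the \emph{strict} inequalities $\tau_0 > 2/3$ and $\tau_1 < 1/3$ (the paper's own proof only obtains $\geq$ and $\leq$); your boosting fix is correct and affects only the hidden constant.
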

\begin{proof}
We describe the reduction for the offline version of the problem, the reduction in the online setting is identical.
The reduction is easy to describe.
We partition the sample indices into intervals of length $\SC^{P,Q}_\eps$.
More precisely, let $Y_j = \{x_{(i-1)\SC^{P,Q}_\eps + 1}, \dots, x_{i\SC^{P,Q}_\eps} \}$, for $j = 1$ to $\lfloor n/\SC^{P,Q}_\eps \rfloor$, and disregard the remaining ``tail'' of $x_i$'s. 
We run the algorithm of Theorem~\ref{thm:private-hypothesis-testing} on each $Y_j$, and produce a bit $z_j=1$ if the algorithm outputs that the distribution is $P$, and a $z_j = -1$ otherwise.

We show that this forms a valid instance of the change-point detection problem in the lemma statement.
Let $k^\ast$ be the change-point in the original problem, and suppose it belongs to $Y_{\tilde k^\ast}$.
For every $j < \tilde k^\ast$, all the samples are from $P$, and by Theorem~\ref{thm:private-hypothesis-testing}, each $z_j$ will independently be $1$ with probability $\tau_0 \geq 2/3$.
Similarly for every $j > \tilde k^\ast$, all the samples are from $Q$, and each $z_j$ will independently be $-1$ with probability at least $1 - \tau_1 \geq 2/3$.

Finally, we show that the existence of an $(\alpha,\beta)$-accurate algorithm that solves this problem also solves the original problem.
Suppose that the output of the algorithm on the restricted change-point detection problem is $j$.
To map this to an answer to the original problem, we let $\hat k = (j-1)\SC^{P,Q}_\eps$.

First, note that $\hat k$ will be $\ve$-differentially private.
We claim that the sequence of $z_j$'s is $\ve$-differentially private. 
This is because the algorithm of Theorem~\ref{thm:private-hypothesis-testing} is $\ve$-differentially private, we apply the algorithm independently to each component of the partition, and each data point can only affect one component (since they are disjoint).
Privacy of $\hat k$ follows since privacy is closed under post-processing.

Finally, we show the accuracy guarantee.
In the restricted change-point detection problem, with probability at least $1-\beta$, the output $j$ will be such that $|j - \tilde k^\ast| \leq \alpha$.
In the original problem's domain, this corresponds to a $\hat k$ such that $|\hat k - k^\ast| \leq (\alpha+1)\SC^{P,Q}_\eps$, as desired.
\end{proof}

\subsection{Solving Bernoulli Change-Point Detection}
\label{sec:bernoulli-cpd}
In this section, we show that there is a $(\Theta(\log(1/\beta)) + 1, \beta)$-accurate algorithm for the restricted change-point detection problem.
Combined with Lemma~\ref{lem:reduction}, this implies Theorem~\ref{thm:cpd}.
\begin{lemma}\label{lem:bernoulli-cpd}
There exists an efficient $(O(\log(1/\beta), \beta)$-accurate algorithm for the offline restricted change-point detection problem (as defined in Lemma~\ref{lem:reduction}).

Similarly, there exists an efficient $(O(\log(1/\beta), \beta)$-accurate algorithm for the online restricted change-point detection problem.
This algorithm requires as input a value $n$ such that $n = \Omega\left(\log\left(\frac{k^\ast}{n \beta}\right)\right)$.
If the algorithm is accurate, it will observe at most $k^\ast + 2n$ data points, and with high probability observe $k^\ast + O(n \log n)$ data points.
\end{lemma}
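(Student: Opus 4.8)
The plan is to work directly with the partial sums of the $z_i$'s. Write $S_j = \sum_{i \le j} z_i$ (with $S_0 = 0$). The entire point of the restricted problem is that for $i < \tilde k^\ast$ the increments $z_i$ are i.i.d.\ with mean $2\tau_0 - 1 > 1/3$, so $S_j$ has a definite positive drift, while for $i > \tilde k^\ast$ the increments have mean $2\tau_1 - 1 < -1/3$, a definite negative drift; the single arbitrary sample $z_{\tilde k^\ast}$ shifts $S$ by at most $1$ and is harmless. For the offline algorithm I would output $\hat k = \arg\max_{0 \le j \le n} S_j$ (smallest maximizer), computable in one $O(n)$ pass, so the algorithm is efficient. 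To bound $\Pr[\hat k \ge \tilde k^\ast + t]$: if the argmax is at some $j \ge \tilde k^\ast + t$ then $S_j \ge S_{\tilde k^\ast}$, i.e.\ $\sum_{i = \tilde k^\ast + 1}^{j} z_i \ge 0$, which is a sum of $j - \tilde k^\ast \ge t$ i.i.d.\ post-change variables in $\{\pm 1\}$ of mean $\le -1/3$, so by Hoeffding $\Pr[S_j \ge S_{\tilde k^\ast}] \le e^{-(j - \tilde k^\ast)/18}$, and a union bound over $j$ gives $\Pr[\hat k \ge \tilde k^\ast + t] \le \sum_{m \ge t} e^{-m/18} = O(e^{-t/18})$. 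Symmetrically, comparing $S_{\hat k}$ to $S_{\tilde k^\ast - 1}$ and using the pre-change increments gives $\Pr[\hat k \le \tilde k^\ast - t] = O(e^{-t/18})$. Choosing $t = \Theta(\log(1/\beta))$ makes the total failure probability at most $\beta$; because the union bound is over a geometric series, the accuracy is independent of the stream length.

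For the online problem I would run the same bookkeeping incrementally, maintaining $S_j$, its running maximum $S^\ast_j$ with running argmaximizer $\hat k_j$, and the drawdown $D_j = S^\ast_j - S_j = \max(0,\, D_{j-1} - z_j)$. The stopping rule: halt at the first time $j^\ast$ with $D_{j^\ast} \ge h$, where $h = \Theta(n)$ is a threshold chosen large enough that $h \ge 3\log(Ck^\ast/\beta)$ for a universal constant $C$ --- this is exactly what the promised bound $n = \Omega(\log(k^\ast/(n\beta)))$ buys us --- and then output $\hat k_{j^\ast}$. Before $\tilde k^\ast$, $D$ is a reflected walk with drift $\le -1/3$; bounding the probability that $S_i - S_j \ge h$ for some pair $i < j < \tilde k^\ast$ by Hoeffding (the geometric decay in $j-i$ keeps the sum over pairs at $O(\tilde k^\ast e^{-h/3})$) shows no false alarm occurs except with probability $\beta/2$. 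On that event $j^\ast \ge \tilde k^\ast$, so the offline argument applied to the observed prefix $\{1, \dots, j^\ast\}$ (which contains index $\tilde k^\ast$) gives $|\hat k_{j^\ast} - \tilde k^\ast| = O(\log(1/\beta))$ with probability $\ge 1 - \beta/2$. For the observation count: once past $\tilde k^\ast$, $D$ has drift $\ge 1/3$ and climbs from $D_{\tilde k^\ast} < h$ to $h$ within $O(n)$ steps except with probability $e^{-\Omega(n)}$, so $j^\ast \le k^\ast + O(n)$ whp --- which, by choosing the constant in $h = \Theta(n)$ appropriately, is $\le k^\ast + 2n$ whenever the algorithm is accurate --- while the cruder estimate that each subsequent block of $\Theta(n)$ post-change samples triggers with constant probability gives $j^\ast \le k^\ast + O(n\log n)$ with probability $1 - 1/\poly(n)$.

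The offline part is routine Chernoff/union-bound bookkeeping. The main obstacle is the online false-alarm analysis: we must control, over an a priori unbounded and unknown horizon $\tilde k^\ast$, the event that the drawdown of a positive-drift walk ever reaches $h$, and it is precisely to make the threshold $h = \Theta(n)$ large enough for that union bound to close that the algorithm must be handed a parameter $n$ satisfying $n = \Omega(\log(k^\ast/(n\beta)))$. A secondary point to verify is that the offline accuracy proof still applies when the prefix length $j^\ast$ is a data-dependent stopping time rather than a fixed horizon; this is fine because the ``bad'' large-deviation events (of the post-change increments after $\tilde k^\ast$, or of the pre-change increments before $\tilde k^\ast$) never reference $j^\ast$ --- the only role of $j^\ast$ is the benign requirement $j^\ast \ge \tilde k^\ast$, which is exactly the no-false-alarm event.
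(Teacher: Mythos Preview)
Your proposal is correct. For the offline part you are using the same statistic as the paper (since $\ell(t)=\sum_{j\ge t}z_j=S_n-S_{t-1}$, your $\arg\max_j S_j$ and the paper's $\arg\min_t\ell(t)$ coincide up to an index shift), but your analysis is more elementary: the paper controls the excursion of the drifted walk uniformly in $i$ via a finite-time law of the iterated logarithm (Balsubramani), whereas you apply Hoeffding pointwise and sum the resulting geometric series in $m=|j-\tilde k^\ast|$. Because the drift is linear, that union bound already closes, and the LIL machinery is not needed.

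The online algorithms are genuinely different. The paper takes a \emph{batch} approach: partition the stream into consecutive intervals of length $n$, declare an interval suspicious if it has a $-1$ majority, and on the first suspicious interval run the offline algorithm on the last $2n$ points. You instead run a CUSUM-type sequential rule on the drawdown $D_j=S^\ast_j-S_j$. Both need the same Chernoff/union-bound false-alarm budget (and hence the same assumption $n=\Omega(\log(k^\ast/(n\beta)))$), and both localize via the offline argmax once detection fires. The one place the batch approach is cleaner is the ``at most $k^\ast+2n$ observations when accurate'' clause: with batching this is structural, since detection is forced no later than the first fully post-change interval; in your CUSUM version the post-change hitting time of $D$ to level $h$ is random, so the bound you actually certify is that $j^\ast\le k^\ast+O(n)$ with probability $\ge 1-\beta$, which matches the clause only after choosing the constant in $h=\Theta(n)$ small enough. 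That is fine for the lemma as stated, but if you want the $2n$ deterministically on the good event, the simplest fix is to also trigger whenever the last $n$ samples have negative sum.
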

\begin{proof}
We start by describing the algorithm for the offline version of the problem.
We then discuss how to reduce from the online problem to the offline problem. 

\paragraph{Offline Change-Point Detection.}
We define the function 
\[\ell(t) = \sum_{j=t}^n z_j. \]
The algorithm's output will be $\hat k = \arg\min_{1 \leq t \leq n} \ell(t)$.

Let $k^\ast$ be the true change-point index.
To prove correctness of this algorithm, we show that $\ell(k^\ast+1) - \ell(t) < 0$ for all $t \geq k^\ast + 1 + \Theta(\log(1/\beta))$, and that $\ell(k^\ast-1) - \ell(t) < 0$ for all $t \leq k^\ast - 1 - \Theta(\log(1/\beta))$. 
Together, these will show that $\arg\min_{1 \leq t \leq n} \ell(t) \in [k^\ast - 1 - \Theta(\log(1/\beta)), k^\ast + 1 + \Theta(\log (1/\beta))]$, proving the result.
For the remainder of this proof, we focus on the former case, the latter will follow symmetrically.
Specifically, we will show that $\ell(k^\ast+1) - \ell(t) < 0$ for all $t \geq k^\ast + 1 + c\log(1/\beta)$, where $c > 0$ is some large absolute constant.

Observe that for $t \geq k^\ast+1$,
\[
\ell(k^\ast+1) - \ell(t) = \sum_{j=k^\ast+1}^{t-1} z_j
\]
forms a biased random walk which takes a $+1$-step with probability $\tau_1 \leq 1/3$ and a $-1$-step with probability $1 - \tau_1 \geq 2/3$.
Define $M_i = \ell(k^\ast+1) - \ell(k^\ast+1 + i) + i(1 - 2 \tau_1)$ for $i = 0$ to $n - k^\ast - 1$, and note that this forms a martingale sequence.
We will use Theorem 4 of~\cite{Balsubramani15}, which provides a finite-time law of the iterated logarithm result.
Specialized to our setting, we obtain the following maximal inequality, bounding how far this martingale deviates away from $0$.
\begin{theorem}[Follows from Theorem 4 of~\cite{Balsubramani15}]
Let $c > 0$ be some absolute constant. 
With probability at least $1 - \beta$, for all $i \geq c \log(1/\beta)$ simultaneously,
\[|M_i| \leq O\left(\sqrt{i\left(\log\log(i) + \log(1/\beta)\right)}\right). \]
\end{theorem}

This implies that, with probability at least $1 - \beta$, we have that for all $i \geq c \log(1/\beta)$ 
\[ \ell(k^\ast+1) - \ell(k^\ast + 1 + i) = M_i - i(1 - 2\tau_1) \leq O\left(\sqrt{i \left(\log \log (i) + \log(1/\beta)\right)}\right) - \frac{i}{3}.\]
Note that the right-hand side is non-increasing in $i$, so it is maximized at $i = c \log(1/\beta)$, and thus
\[ \ell(k^\ast+1) - \ell(k^\ast + 1 + i) \leq  O\left(\sqrt{\log(1/\beta) \left(\log \log \log(1/\beta) + \log(1/\beta)\right)}\right) - \frac{c \log(1/\beta)}{3} < 0,\]
where the last inequality follows for a sufficiently large choice of $c$.

\paragraph{Online Change-Point Detection.}
The algorithm will be as follows.
Partition the stream into consecutive intervals of length $n$, which we will draw in batches.
If an interval has more $-1$'s than $+1$'s, then call the offline change-point detection algorithm on the final $2n$ data points with failure probability parameter set to $\beta/4$, and output whatever it says.

Let $k^\ast$ be the true change-point index.
First, we show that with probability $\geq 1 - \beta/4$, the algorithm will not see more $-1$'s than $+1$'s in any interval before the one containing $k^\ast$.
The number of $+1$'s in this interval will be distributed as $\operatorname{Binomial}(n,\tau_0)$ for $\tau_0 > 2/3$.
By a Chernoff bound, the probability that we have $> n/2$ $-1$'s is at most $\exp(-\Theta(n))$.
Taking a union bound over all $O\left(\frac{k^\ast}{n}\right)$ intervals before the change point gives a failure probability of $\frac{k^\ast}{n}\exp(-\Theta(n)) \leq \beta/4$, where the last inequality follows by our condition on $n$.

Next, note that the interval following the one containing $k^\ast$ will have a number of $+1$'s which is distributed as $\operatorname{Binomial}(n,\tau_1)$ for $\tau_1 < 1/3$.
By a similar Chernoff bound as before, the probability that we have $> n/2$ $+1$'s is at most $\exp(-\Theta(n)) \ll \beta/4$.
Therefore, with probability $1 - \beta/2$, the algorithm will call the offline change-point detection algorithm on an interval containing the true change point $k^\ast$.

We conclude by the correctness guarantees of the offline change-point detection algorithm.
Note that we chose the failure probablity parameter to be $\beta/4$, as the offline algorithm may either be called at the interval containing $k^\ast$, or the following one, and we take a union bound over both of them.
\end{proof}

\subsection{Private Goodness-of-Fit Change-Point Detection}
\label{sec:identity-cpd}
Our reduction as given above is rather general: and it can apply to more general change-point detection settings than those described above.
For instance, the above discussion assumes we know both the initial and final distributions $P$ and $Q$.
Instead, one could imagine a setting where one knows the initial distribution $P$ but not the final distribution $Q$, which we term goodness-of-fit change-point detection.

\begin{definition}
In the \emph{offline $\gamma$-goodness-of-fit change-point detection problem}, we are given a distribution $P$ over domain $\cX$ and a data set $X = \{x_1, \dots, x_n\}$.
We are guaranteed that there exists $k^\ast \in [n]$ such that $x_1, \dots, x_{k^\ast-1} \sim P$, and $x_{k^\ast}, \dots, x_n \sim Q$, for some fixed (but unknown) distribution $Q$ over domain $\cX$, such that $\tvd(P,Q) \geq \gamma$.
The goal is to output $\hat k$ such that $|\hat k - k^\ast|$ is small.
\end{definition}
We note that analogous definitions and results hold for the online version of this problem, as in the previous sections.

We omit the full details of the proof, but it proceeds by a very similar argument to that in Sections~\ref{sec:reduction-cpd} and~\ref{sec:bernoulli-cpd}.
In particular, it is possible to prove an analogue of Lemma~\ref{lem:reduction}, at which point we can apply Lemma~\ref{lem:bernoulli-cpd}.
The only difference is that we need an algorithm for private goodness-of-fit testing, rather than Theorem~\ref{thm:private-hypothesis-testing} for hypothesis testing.
We use the following result from~\cite{AcharyaSZ18a}.
\begin{theorem}[Theorem 13 of~\cite{AcharyaSZ18a}]
Let $P$ be a known distribution over $\cX$, and let $\mathcal{Q}$ be the set of all distributions $Q$ over $\cX$ such that $\tvd(P,Q) \geq \gamma$.
Given $n$ samples from an unknown distribution which is either $P$, or some $Q \in \mathcal{Q}$, there exists an efficient $\ve$-differentially private algorithm which distinguishes between the two cases with probability $\geq 2/3$ when
$$n = \Theta\left(\frac{|\cX|^{1/2}}{\gamma^2} + \frac{|\cX|^{1/2}}{\gamma \ve^{1/2}} + \frac{|\cX|^{1/3}}{\gamma^{4/3}\ve^{2/3}} + \frac{1}{\gamma \ve}\right).$$
\end{theorem}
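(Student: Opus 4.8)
The plan is to reduce to uniformity testing and then run an OR of a constant number of $\eps$-DP sub-tests, so that each of the four summands is the sample complexity of the sub-test that dominates in its regime. Write $k=\lvert\cX\rvert$. First I would apply the standard distribution-to-uniform reduction (Goldreich; Diakonikolas--Kane): relabelling each sample by an independent map takes $\eps$-DP-distinguishing $P$ from $\{Q:\tvd(P,Q)\ge\gamma\}$ to $\eps$-DP-distinguishing $\mathrm{Unif}([m])$, $m=\Theta(k)$, from $\{Q':\tvd(Q',\mathrm{Unif})\ge\Omega(\gamma)\}$. Since this is a per-sample post-processing it preserves $\eps$-DP and preserves the TV gap up to a constant, so it suffices to build the tester for uniformity, renaming $m$ back to $k$.

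I would use two $(\eps/2)$-DP sub-tests and declare the distribution to be $Q$ iff either rejects. The first is a privatized empirical-TV test: compute $\hat T=\tvd(\hat p,\mathrm{Unif})$, note that changing one sample moves $\hat p$ by $2/n$ in $\ell_1$ so $\hat T$ has sensitivity $O(1/n)$, release $\hat T+\mathrm{Lap}(O(1/(n\eps)))$, and threshold; since $\hat T=O(\sqrt{k/n})$ under $H_0$ (histogram concentration) and $\hat T\ge\Omega(\gamma)$ under $H_1$, requiring $\gamma$ to beat the noise $O(1/(n\eps))$ gives the $\frac1{\gamma\eps}$ term. The second is a privatized, count-clipped, Valiant--Valiant-style bias-corrected chi-squared statistic: fix a clipping level $t$, set $Z=\sum_i\frac{(\min(N_i,t)-n/k)^2-\min(N_i,t)}{n/k}$, observe that changing one sample changes $Z$ by $O\big(\frac{tk}{n}+1\big)$ — a cell whose count exceeds $t$ contributes nothing to the change — and release $Z+\mathrm{Lap}\big(O(\frac{tk}{n}+1)/\eps\big)$.

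Moment computations give $\mathbb{E}_{H_0}[Z]=O(1)$ and $\mathrm{Var}_{H_0}[Z]=O(k)$, and — via the thresholding/flattening argument that handles the $\chi^2$-versus-$\tvd^2$ gap and the heavy cells — a gap $\mathbb{E}_{H_1}[Z]-\mathbb{E}_{H_0}[Z]=\Omega(n\gamma^2)$ with $H_1$-variance $O(k+n\gamma^2)$. Thresholding $Z$ at $\Theta(n\gamma^2)$: beating $\sqrt{\mathrm{Var}}$ forces $n=\Omega(\sqrt k/\gamma^2)$, and beating the Laplace noise $O\big((\frac{tk}{n}+1)/\eps\big)$ forces a further lower bound on $n$ depending on $t$. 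The clipping level is then optimized against the signal it removes under $H_1$ (this is where the careful bookkeeping lives); taking this optimum and simplifying the resulting $\min/\max$ over the two sub-tests yields, together with the $\sqrt k/\gamma^2$ and $1/(\gamma\eps)$ terms above, the remaining terms $\frac{k^{1/3}}{\gamma^{4/3}\eps^{2/3}}$ and $\frac{\sqrt k}{\gamma\,\eps^{1/2}}$. Privacy is immediate (Laplace mechanism plus post-processing per sub-test, basic composition for the pair), and the matching lower bound for the $\Theta$ comes from the classical $\Omega(\sqrt k/\gamma^2)$ uniformity bound together with standard DP lower bounds (packing, or the $d_\eps$-Wasserstein coupling method of Section~\ref{lower}).

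I expect the sensitivity analysis of the chi-squared statistic to be the main obstacle. The naive bound $\max_i N_i\le n$ makes its sensitivity $\Theta(k)$, which is useless, so one must clip (or bucket/flatten) the histogram and then re-establish that the clipped statistic still separates $H_0$ from $H_1$ with gap $\Omega(n\gamma^2)$; this requires splitting $\cX$ into heavy cells — where too much mass on few cells is itself detectable, handled by a count/TV argument — and light cells, handled by the usual chi-squared deviation bound, and then tuning the clipping level to balance the lost signal against the privacy noise. Getting all four exponents to come out exactly is precisely this optimization, and is the technically delicate part.
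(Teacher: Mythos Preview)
The paper does not prove this theorem. It is quoted verbatim as Theorem~13 of~\cite{AcharyaSZ18a} and used as a black box to feed the reduction of Lemma~\ref{lem:reduction}; there is no proof in the paper to compare your proposal against.

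For what it is worth, your sketch is broadly in the spirit of the actual proof in~\cite{AcharyaSZ18a}: reduce identity testing to uniformity testing via the per-sample randomized map (which, as you note, preserves $\eps$-DP by post-processing), and then privatize a bias-corrected collision/chi-squared statistic whose sensitivity is controlled by a flattening or clipping step, with the four terms arising from balancing statistical noise against Laplace noise in the relevant regimes. Your identification of the sensitivity control on the chi-squared statistic as the crux is accurate. But since the present paper simply imports the result, there is nothing further to check here.
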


With this in hand, we have the following result for goodness-of-fit changepoint detection.
\begin{theorem}
There exists an efficient $\ve$-differentially private and $(\alpha, \beta)$-accurate algorithm for offline $\gamma$-goodness-of-fit change-point detection with 
$$\alpha = \Theta\left(\left(\frac{|\cX|^{1/2}}{\gamma^2} + \frac{|\cX|^{1/2}}{\gamma \ve^{1/2}} + \frac{|\cX|^{1/3}}{\gamma^{4/3}\ve^{2/3}} + \frac{1}{\gamma \ve}\right) \cdot \log(1/\beta)\right).$$
\end{theorem}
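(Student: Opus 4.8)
The plan is to follow the two-step template of Sections~\ref{sec:reduction-cpd} and~\ref{sec:bernoulli-cpd} essentially verbatim, replacing the private simple-hypothesis tester of Theorem~\ref{thm:private-hypothesis-testing} by the private goodness-of-fit tester of~\cite{AcharyaSZ18a} quoted above. Write
\[
m \triangleq \Theta\left( \frac{|\cX|^{1/2}}{\gamma^2} + \frac{|\cX|^{1/2}}{\gamma \eps^{1/2}} + \frac{|\cX|^{1/3}}{\gamma^{4/3}\eps^{2/3}} + \frac{1}{\gamma \eps} \right)
\]
for the sample complexity in that theorem, taken large enough that the tester succeeds with probability at least $2/3$.

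First I would establish the analogue of Lemma~\ref{lem:reduction}. Partition $[n]$ into consecutive blocks $Y_1,\dots,Y_{\lfloor n/m\rfloor}$ of length $m$ (discarding the tail), run the $\eps$-DP goodness-of-fit tester of~\cite{AcharyaSZ18a} on each $Y_j$, and set $z_j = 1$ if it reports ``$P$'' and $z_j = -1$ otherwise. Let $\tilde k^\ast$ be the index of the block containing the true change-point $k^\ast$. For $j < \tilde k^\ast$ every sample in $Y_j$ is i.i.d.\ from $P$, so $z_j = 1$ independently with probability $\tau_0 \geq 2/3$; for $j > \tilde k^\ast$ every sample in $Y_j$ is i.i.d.\ from the fixed (unknown) distribution $Q$ with $\tvd(P,Q)\geq\gamma$, which lies in the class $\mathcal{Q}$ of the quoted theorem, so the tester reports ``$Q$'' and hence $z_j = -1$ independently with probability $1-\tau_1 \geq 2/3$; and $z_{\tilde k^\ast}$ is arbitrary. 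Thus $(z_j)_j$ is a valid instance of the restricted Bernoulli change-point problem of Lemma~\ref{lem:reduction}, with $\tau_0 > 2/3$ and $\tau_1 < 1/3$. Privacy is argued exactly as there: the tester is $\eps$-DP, it is applied independently to the disjoint blocks, and each data point lies in a single block, so parallel composition makes $(z_j)_j$ — and hence the rescaled output $\hat k$, which is a post-processing of it — $\eps$-DP.

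Then I would invoke Lemma~\ref{lem:bernoulli-cpd} on $(z_j)_j$, obtaining with probability at least $1-\beta$ an estimate $j$ with $|j - \tilde k^\ast| = O(\log(1/\beta))$; rescaling back to the original index set gives $|\hat k - k^\ast| = O(m\log(1/\beta))$, which is the claimed upper bound on $\alpha$. For the matching lower bound it suffices (by the argument sketched after Theorem~\ref{thm:cpd}) to observe that an $(\alpha,\beta)$-accurate $\eps$-DP change-point algorithm with $\alpha < m/2$ would, when run on a stream whose change-point is placed either beyond the observed window or in the middle of it, distinguish those two scenarios while reading data that differs in only $O(m)$ coordinates — yielding an $\eps$-DP goodness-of-fit test with $o(m)$ samples and contradicting the optimality of the sample complexity quoted from~\cite{AcharyaSZ18a}; the extra $\log(1/\beta)$ factor for small $\beta$ follows from a standard confidence-amplification lower bound for sequential testing. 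The online version is obtained identically, feeding $(z_j)_j$ to the online half of Lemma~\ref{lem:bernoulli-cpd}.

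The only genuine content beyond bookkeeping is checking that the bit $z_j$ output on a block drawn from the ``wrong'' distribution lands on the correct value with probability bounded away from $1/2$ \emph{uniformly} over all admissible $Q$ — this is precisely the worst-case $2/3$ success guarantee of~\cite{AcharyaSZ18a}, and it is what makes the hypotheses $\tau_0 > 2/3$, $\tau_1 < 1/3$ of Lemma~\ref{lem:reduction} available — together with being careful enough with the reduction in the previous paragraph to justify the ``$\Theta$'' (two-sided) claim rather than just the easy ``$O$'' direction. I expect that to be the main, and essentially only, obstacle.
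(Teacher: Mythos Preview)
Your proposal is correct and follows exactly the approach the paper sketches: replace the simple-hypothesis tester in the reduction of Lemma~\ref{lem:reduction} by the $\eps$-DP goodness-of-fit tester of~\cite{AcharyaSZ18a}, obtain a valid instance of the restricted Bernoulli change-point problem, and invoke Lemma~\ref{lem:bernoulli-cpd}. The paper in fact omits the details entirely and only states that the argument ``proceeds by a very similar argument'' to Sections~\ref{sec:reduction-cpd}--\ref{sec:bernoulli-cpd}; in particular it does not spell out the lower-bound direction of the $\Theta$, so your additional discussion of that point (and your correct identification of the uniform-over-$Q$ success guarantee as the one nontrivial check) already goes beyond what the paper provides.
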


\section*{Acknowledgments}
We are grateful to Salil Vadhan for valuable discussions in the early stages of this work.

{\small
\bibliographystyle{alpha}
\bibliography{biblio}
}

\appendix
\section{Glossary of Tests}
\label{sec:glossary}
\noindent In this section, we list all the tests mentioned in this paper.
As some mnemonics, the prefix ``s'' indicates that a test is ``soft,'' meaning that the test's output is distributed as Bernoulli random variable with parameter proportional to some test statistic.
The prefix ``n'' means that a statistic is ``noisy,'' which we enforce by adding Laplace noise. 
The prefix ``c'' means that a statistic is ``clamped'': to limit the sensitivity of the statistic, we clamp the value of each summand to a fixed range, so that terms can not be unboundedly large.
\smallskip

\noindent The $\LLR$ is the log-likelihood ratio statistic:
\begin{equation*}
\LLR(x_1,\dots,x_n) = \sum_{i=1}^{n} \log \frac{P(x_i)}{Q(x_i)}.
\end{equation*}

\noindent The $\sLLR$ is the soft log-likelihood ratio test:
\[
\sLLR(x_1, \cdots, x_n) =  
\begin{cases}
P & \text{ with probability } g(x)\\
Q & \text{ with probability } 1-g(x)
\end{cases}
\]
where 
\[
    g(x)=\frac{\exp(\frac{1}{2}\log\frac{P(x)}{Q(x)})}{1+\exp(\frac{1}{2}\log\frac{P(x)}{Q(x)})}\in[0,1].
\]

\noindent The $\CLLR$ is the clamped log-likelihood ratio statistic:
\begin{equation*}
\CLLR_{a,b}(x_1, \cdots, x_n) = \sum_i \left[ \log \frac{P(x_i)}{Q(x_i)} \right]_{a}^{b}.
\end{equation*}

\noindent The $\sCLLR$ is the soft clamped log-likelihood ratio test:
\begin{equation*}
\sCLLR_{a,b}(x_1, \cdots, x_n)= \begin{cases} 
P & \text{with probability } \propto \exp(\frac12 \CLLR_{a,b}(x))\\ 
Q & \text{with probability }  \propto 1
\end{cases}
\end{equation*}

\noindent The $\NCLLR$ is the noisy log-likelihood ratio test, which is also $\eps$-differentially private:
\begin{equation*}
  \NCLLR_{-\eps',\eps}(x_1, \cdots, x_n) = \sum_i \left[ \log \frac{P(x_i)}{Q(x_i)} \right]_{-\eps'}^{\eps} + \mathrm{Lap}(2).
\end{equation*}

\section{Proof of existence of $\eps'$}
\label{sec:eps-proof}

\begin{lemma}
Let $P$ and $Q$ be two arbitrary distributions and  let $\tau$ be as defined in \eqref{def:tau}. Assume without loss of generality that $\tau=\int_{\cX}\max\{P(x)-e^{\eps}Q(x),0\}\dx$. Then there exists $\eps'\in[0,\eps]$ such that $\tau=\int_{\cX}\max\{Q(x)-e^{\eps'}P(x),0\}\dx$.
\end{lemma}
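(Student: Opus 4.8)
The plan is to phrase this as an application of the intermediate value theorem to a suitable one-variable function. Define, for $\alpha \in [0,\eps]$,
\[
f(\alpha) = \int_{\cX} \max\{Q(x) - e^{\alpha} P(x), 0\}\dx.
\]
First I would check that $f$ is well-defined, non-increasing, and continuous on $[0,\eps]$. Monotonicity is immediate: for each fixed $x$ the integrand $\max\{Q(x)-e^{\alpha}P(x),0\}$ is non-increasing in $\alpha$, since $e^{\alpha}$ is increasing. For continuity I would invoke the dominated convergence theorem: for every $x$ the map $\alpha \mapsto \max\{Q(x)-e^{\alpha}P(x),0\}$ is continuous, and it is dominated by the integrable function $Q(x)$ (as $0 \le \max\{Q(x)-e^{\alpha}P(x),0\} \le Q(x)$), so $f$ is continuous.

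Next I would pin down the values at the two endpoints and sandwich $\tau$ between them. At $\alpha = 0$ we have $f(0) = \int_{\cX}\max\{Q(x)-P(x),0\}\dx = \tvd(P,Q)$. Since $\max\{P(x)-e^{\eps}Q(x),0\} \le \max\{P(x)-Q(x),0\}$ pointwise (because $e^{\eps} \ge 1$), and $\int \max\{P-Q,0\} = \tvd(P,Q)$, we get $\tau \le f(0)$. At the other endpoint, $f(\eps) = \int_{\cX}\max\{Q(x)-e^{\eps}P(x),0\}\dx$, which by the definition of $\tau$ and the WLOG assumption that $\tau$ is the \emph{larger} of the two hockey-stick divergences is at most $\tau$. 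Hence $f(\eps) \le \tau \le f(0)$.

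Finally, by continuity of $f$ on $[0,\eps]$ and the intermediate value theorem, there exists $\eps' \in [0,\eps]$ with $f(\eps') = \tau$, which is the claim. (If the \emph{largest} such $\eps'$ is desired, as used in the body of the paper, I would additionally note that $f^{-1}(\{\tau\})$ is closed by continuity and bounded above by $\eps$, hence contains its supremum.) I do not anticipate any real obstacle here; the only mild point of care is justifying continuity of $f$, which the dominated convergence argument above handles cleanly, and handling the degenerate possibility $\tvd(P,Q)=1$ (supports essentially disjoint), in which case $f(0)=1$ and the sandwich still goes through.
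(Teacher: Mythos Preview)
Your proposal is correct and follows essentially the same approach as the paper: define the one-variable function $\alpha \mapsto \int_{\cX}\max\{Q(x)-e^{\alpha}P(x),0\}\dx$, establish continuity via dominated convergence (using the bound $Q(x)$), observe that $f(0)=\tvd(P,Q)\ge \tau$ and $f(\eps)\le \tau$, and conclude by the intermediate value theorem. Your additional remarks on monotonicity and on extracting the \emph{largest} such $\eps'$ via closedness of $f^{-1}(\{\tau\})$ are correct and in fact go slightly beyond what the paper spells out.
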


\begin{proof}
First, we claim that the function $\phi\colon[0,\infty)\to \R$ defined by $\phi(y)=\int_{\cX}\max\{Q(x)-e^{y}P(x),0\}\dx$ is continuous. Indeed, define, for $y\geq 0$, the function $\phi_y(x) = \max\{Q(x)-e^{y}P(x),0\}$ (on $\cX$). For any fixed $y_0\geq 0$, we have (i)~pointwise convergence of $\phi_y$ to $\phi_{y_0}$; (ii)~(Lebesgue) integrability of $\phi_y$ on $\cX$; and (iii)~for all $y\geq 0$ and $x\in\cX$, $|\phi_y(x)| \leq Q(x)$ with $\int_{\cX} Q(x)\dx < \infty$. By the dominated convergence theorem, we get that $\lim_{y\to y_0}\phi(y) = \phi(y_0)$.

Now, by assumption, $\phi(\eps)=\int_{\cX}\max\{Q(x)-e^{\eps}P(x),0\}\dx\le\tau$ and $\phi(0)=\int_{\cX}\max\{Q(x)-e^{0}P(x),0\}\dx=\tvd(P,Q)\ge\tau$. Thus, by the intermediate value theorem, there exists $\eps'\in[0,\eps]$ such that $\phi(\eps') = \tau$.
\end{proof}

\section{Proof of Theorem~\ref{thm:private-hypothesis-testing} when $P$ and $Q$ have disjoint support.}\label{app:disjoint:supports}

Suppose that $P$ and $Q$ have disjoint support so $\tilde{P}=\tilde{Q}=0$. Then Theorem \ref{thm:private-hypothesis-testing} reduces to:

\begin{theorem}
  If $P$ and $Q$ have disjoint support then $\sCLLR_{-\eps',\eps}$ and $\NCLLR_{-\eps',\eps}$ are asymptotically optimal among all $\eps$-DP tests and have sample complexity $SC^{P,Q}(\sCLLR_{-\eps',\eps})= SC^{P,Q}(\NCLLR_{-\eps',\eps}) = \Theta\left(\frac{1}{\eps}\right)$.
\end{theorem}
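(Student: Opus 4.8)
The plan is to reduce this special case to the machinery already developed, after noting that disjoint support collapses most of the relevant quantities. First I would record the consequences of $\tvd(P,Q)=1$: it forces $\tilde P=\min\{e^\eps Q,P\}=0$ and $\tilde Q=\min\{e^{\eps'}P,Q\}=0$ pointwise, so $\tau=\int_{\cX}\max\{P(x)-e^\eps Q(x),0\}\dx=\int_{\mathrm{supp}(P)}P(x)\dx=1$; and since $\int_{\cX}\max\{Q(x)-e^{y}P(x),0\}\dx=\int_{\mathrm{supp}(Q)}Q(x)\dx=1=\tau$ for \emph{every} $y\ge 0$, the largest value in $[0,\eps]$ satisfying the defining equation is $\eps'=\eps$. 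Consequently, for $x\in\mathrm{supp}(P)$ we have $\log\frac{P(x)}{Q(x)}=+\infty$, which clamps to $\eps$, and for $x\in\mathrm{supp}(Q)$ it clamps to $-\eps$; hence $\CLLR_{-\eps,\eps}(X)=n\eps$ whenever $X\sim P^n$ and $\CLLR_{-\eps,\eps}(X)=-n\eps$ whenever $X\sim Q^n$, so the unnoised clamped statistic separates the two hypotheses deterministically, with gap $\Delta_{\rm gap}=2n\eps$.

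For the upper bound I would analyze $\NCLLR_{-\eps,\eps}$ directly: under $P^n$ its value is $n\eps+\mathrm{Lap}(2)$ and under $Q^n$ it is $-n\eps+\mathrm{Lap}(2)$, so with threshold $0$ the advantage equals $\Pr[\mathrm{Lap}(2)>-n\eps]-\Pr[\mathrm{Lap}(2)>n\eps]=1-e^{-n\eps/2}$, which exceeds $2/3$ once $n\ge\frac{2\ln 3}{\eps}=O(1/\eps)$ (one may alternatively invoke Lemma~\ref{lem:vargap} exactly as in the proof of Theorem~\ref{upperboundforproof}, using $\var{P^n}{\NCLLR}=\var{Q^n}{\NCLLR}=8$ and $\Delta_{\rm gap}=2n\eps$). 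The $\eps$-DP of $\NCLLR_{-\eps,\eps}$ follows since $\CLLR_{-\eps,\eps}$ has sensitivity $\eps-(-\eps)=2\eps$ and the Laplace noise has scale $2$; likewise $\sCLLR_{-\eps,\eps}$ is the exponential mechanism with score $\tfrac12\CLLR_{-\eps,\eps}$, hence $\tfrac{\eps-(-\eps)}{2}=\eps$-DP. To transfer optimality from $\NCLLR$ to $\sCLLR$ I would use Corollary~\ref{Noptimalimpliessoptimal} rather than Lemma~\ref{noisetosoft}, part~2, since the latter requires the log-likelihood ratio to be bounded and here it is not; equivalently, one may just repeat the elementary computation that $\sCLLR_{-\eps,\eps}$ outputs $P$ with probability $\tfrac{e^{n\eps/2}}{1+e^{n\eps/2}}$ on $P^n$ and $\tfrac{e^{-n\eps/2}}{1+e^{-n\eps/2}}$ on $Q^n$, giving advantage $\ge 1-2e^{-n\eps/2}$ and hence $SC^{P,Q}(\sCLLR_{-\eps,\eps})=O(1/\eps)$ as well.

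For the lower bound I would apply the coupling characterization of Lemma~\ref{lem:dp->wasserstein}. Because $\mathrm{supp}(P)\cap\mathrm{supp}(Q)=\emptyset$, \emph{every} coupling $\rho$ of $P^n$ and $Q^n$ is supported on pairs $(X,Y)$ with $x_i\neq y_i$ for all $i$, so $d_H(X,Y)=n$ and $d_\eps(X,Y)=\min\{\eps n,1\}$ almost surely; thus $W_{d_\eps}(P,Q)=\min\{\eps n,1\}$, and Lemma~\ref{lem:dp->wasserstein} gives that every $\eps$-DP test has advantage $O(\min\{\eps n,1\})$, which is $\Omega(1)$ only when $n=\Omega(1/\eps)$. (This is exactly what the coupling in the proof of Theorem~\ref{lowerbound} produces once $\Set{A}=\emptyset$, every point of $\Set{S}$ is relabeled with probability $1$, and $n'=0$.) Combining the two bounds gives $SC^{P,Q}(\NCLLR_{-\eps,\eps})=SC^{P,Q}(\sCLLR_{-\eps,\eps})=\Theta(1/\eps)$, optimal among $\eps$-DP tests. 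I do not expect a genuine obstacle here: the only points requiring care are pinning down $\eps'=\eps$ from the general definition, and recognizing that the bridge from the noisy to the soft test must be taken via Corollary~\ref{Noptimalimpliessoptimal} (or by direct calculation) rather than via Lemma~\ref{noisetosoft}, part~2.
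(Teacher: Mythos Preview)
Your proposal is correct and follows essentially the same approach as the paper: identify $\tau=1$ and $\eps'=\eps$, compute $\CLLR_{-\eps,\eps}$ deterministically under $P^n$ and $Q^n$, analyze $\NCLLR$ and $\sCLLR$ directly to get the $O(1/\eps)$ upper bound, and use the coupling bound of Lemma~\ref{lem:dp->wasserstein} with $d_\eps(X,Y)=\min\{\eps n,1\}$ for the matching lower bound. Your treatment is slightly more careful in places (e.g., noting that \emph{every} coupling has $d_H(X,Y)=n$ a.s., and flagging that Lemma~\ref{noisetosoft} part~2 does not apply here), but the argument is the same.
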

\begin{proof}
First consider $\CLLR_{-\eps',\eps}$, and note that the disjoint supports imply that $\tau = 1$ and $\eps' = \eps$.
Under the distribution $P$, $\CLLR_{-\eps, \eps}$ will (deterministically) be $n\eps$. Indeed, the test $\NCLLR_{-\eps,\eps}$ will be distributed as $n\eps + \Lap(2)$, and for this to be greater than $0$ (and thus correct) with probability $\geq 2/3$, it is necessary and sufficient that $n = \Omega(1/\eps)$.
Similarly, $\sCLLR_{-\eps, \eps}$ is correct with probability $\frac{\exp\left(\frac12 n\eps\right)}{1 + \exp\left(\frac12 n\eps\right)} \geq 2/3$ as long as $n = \Omega(1/\eps)$.
A symmetric argument shows correctness under the distribution $Q$.

For the lower bound, consider the coupling of $P^n$ and $Q^n$ that takes a sample $X\sim P^n$ and simply resamples it from $Q^n$. Since $P^n$ and $Q^n$ have disjoint support, $d_{\eps}(X,Y)=\min\{\eps n, 1\}$. Thus, by Lemma \ref{lem:dp->wasserstein}, we have that $n\ge \Omega(1/\eps)$ is a necessary condition.
\end{proof}
\section{Proof of Lemma~\ref{lem:vargap}}\label{vargapproof}

\begin{proof}
  \newcommand{\threshold}{t}
  Suppose, without loss of generality, that $\mathbb{E}_{X \sim
    P^{n'}}[S(X)]>\mathbb{E}_{X \sim Q^{n'}}[S(X)]$, let $n' =
  12c^2 n$ and let $\threshold$ be equal to $\frac{1}{2}\left(\mathbb{E}_{X \sim P^{n'}}[S(X)] + \mathbb{E}_{X \sim Q^{n'}}[S(X)]\right)$. Note also that since $S(X)=\sum_{i=1}^n f(x_i)$, we have that $|\mathbb{E}_{X \sim P^{n'}}[S(X)]-\mathbb{E}_{X \sim Q^{n'}}[S(X)]| = 12c^2|\mathbb{E}_{X \sim P^{n}}[S(X)]-\mathbb{E}_{X \sim Q^{n}}[S(X)]|$ and $\var{X \sim P^{n'}}{S(X)} = 12c^2\var{X \sim P^n}{S(X)}$. 
 Then 
\begin{align*}
  \pr{X \sim P^{n'}}{S(X)<\threshold}&\le \pr{X \sim P^{n'}}{|S(X)-\mathbb{E}_{X \sim P^{n'}}[S(X)]|>\frac{|\mathbb{E}_{X \sim P^{n'}}[S(X)]-\mathbb{E}_{X \sim Q^{n'}}[S(X)]|}{2}}\\\
&\le 4\frac{\var{X \sim P^{n'}}{S(X)}}{|\mathbb{E}_{X \sim P^{n'}}[S(X)]-\mathbb{E}_{X \sim Q^{n'}}[S(X)]|^2}\\
&\le 4\frac{12c^2c^2}{144c^4} = \frac{1}{3}
\end{align*}
The second inequality uses Chebyshev's inequality, and the third uses the assumptions of the lemma statement.
The proof that $\pr{Q^{n'}}{S(X)>\threshold}\le \frac{1}{3}$
proceeds similarly.


\end{proof}

\section{The Advantage of $\sCLLR$}
\label{sec:adv-scllr}
In this section, we show that the Soft Clamped Log-Likelihood Ratio test ($\sCLLR$) achieves advantage related to the quantities $\tau$ and transformed distributions $P',Q'$ introduced earlier.  Recall the definition
\begin{equation*}
\sCLLR_{-\eps',\eps}(x)= \begin{cases} 
P & \text{with probability } \propto \exp(\frac12 \CLLR_{-\eps',\eps}(x))\\ 
Q & \text{with probability }  \propto 1
\end{cases}
\end{equation*}
Also recall the definitions of $\tilde{P},\tilde{Q},P',Q',\eps'$ from Section~\ref{sec:results}.  We now have the following lemma.
\begin{lem}
For any $P$ and $Q$, 
$
\advg^{P,Q}_{1}(\sCLLR_{-\eps',\eps}) = \Theta\left(\eps\tau+H^2(P',Q')\right)
$
\end{lem}
To ease the notation, for the remainder of this section we will simply write $\sCLLR = \sCLLR_{-\eps',\eps}$ and $\advg(\sCLLR) = \advg^{P,Q}_{1}(\sCLLR)$.
\begin{proof}
Firstly, by definition we have 
\[
\advg(\sCLLR)=\frac{1}{2}\int_{}(P(x)-Q(x))\frac{e^{\CLLR(x)/2}-1}{e^{\CLLR(x)/2}+1}dx,
\] 
where $\CLLR(x) = \CLLR_{-\eps',\eps} =\left[\log\frac{P(x)}{Q(x)}\right]_{-\eps'}^{\eps}=\log\frac{\tilde{P}(x)}{\tilde{Q}(x)}.$  The latter equality holds by construction of $\tilde{P},\tilde{Q}$. Now we can break up the integrand $(P(x)-Q(x))\frac{e^{\CLLR(x)/2}-1}{e^{\CLLR(x)/2}+1}$ as
\begin{align*}
&\underbrace{(P(x)-\tilde{P}(x))\frac{e^{\CLLR(x)/2}-1}{e^{\CLLR(x)/2}+1}}_{\text{term 1}} + \underbrace{(\tilde{Q}(x)-Q(x))\frac{e^{\CLLR(x)/2}-1}{e^{\CLLR(x)/2}+1}}_{\text{term 2}} + \underbrace{(\tilde{P}(x)-\tilde{Q}(x))\frac{e^{\CLLR(x)/2}-1}{e^{\CLLR(x)/2}+1}}_{\text{term 3}} \,. 
\end{align*}
Now, $P$ and $\tilde{P}$ only differ on the set 
\[
\Set{S} = \setOf{x}{P(x)-e^{\eps}Q(x)>0}
\]
so the term 1 is only non-zero if $x\in \Set{S}$, where it takes the value
\[
(P(x)-\tilde{P}(x))\frac{e^{\eps/2}-1}{e^{\eps/2}+1}.
\]
Similarly, $Q$ and $\tilde{Q}$ only differ on the set 
\[
\Set{T} = \setOf{x}{Q(x)-e^{\eps'}P(x)>0}
\]
so term 2 is only non-zero if $x\in \Set{T}$, where it takes the value 
\[
(\tilde{Q}(x)-{Q}(x))\frac{e^{-\eps'/2}-1}{e^{-\eps'/2}+1}=(Q(x)-\tilde{Q}(x))\frac{e^{\eps'/2}-1}{e^{\eps'/2}+1}.
\]
Overall, we get the following equality: 
 \begin{align*}
 (P(x)-Q(x)) \frac{e^{\CLLR(x)/2}-1}{e^{\CLLR(x)/2}+1} 
 &= (P(x)-\tilde{P}(x)) \frac{e^{\eps/2}-1}{e^{\eps/2}+1} + (Q(x)-\tilde{Q}(x)) \frac{e^{\eps'/2}-1}{e^{\eps'/2}+1} + (\tilde{P}(x)-\tilde{Q}(x)) \frac{e^{\CLLR(x)/2}-1}{e^{\CLLR(x)/2}+1}\,.
\end{align*}
Integrating over $x$, we get that $2 \cdot \advg(\sCLLR)$ is equal to
 \begin{align}
\frac{e^{\eps/2}-1}{e^{\eps/2}+1}  \underbrace{\int (P(x)-\tilde{P}(x))dx}_\tau + \frac{e^{\eps'/2}-1}{e^{\eps'/2}+1} \underbrace{\int (Q(x)-\tilde{Q}(x)) dx}_\tau  + \int (\tilde{P}(x) - \tilde{Q}(x)) \frac{e^{\log\frac{\tilde{P}(x)}{\tilde{Q}(x)}/2}-1}{e^{\log\frac{\tilde{P}(x)}{\tilde{Q}(x)}/2}+1} dx
     \label{eq:swLLR-advantage}
 \end{align}
Now, since $P',Q'$ are simply renormalizations of $\tilde{P}$, $\tilde{Q}$ by the same factor $1-\tau$,
\begin{align*}
 &\frac{e^{\frac{1}{2}\log\frac{\tilde{P}(x)}{\tilde{Q}(x)}}-1}{e^{\frac{1}{2}\log\frac{\tilde{P}(x)}{\tilde{Q}(x)}}+1} 
 ={} \frac{e^{\frac{1}{2}\log\frac{P'(x)}{Q'(x)}}-1}{e^{\frac{1}{2}\log\frac{P'(x)}{Q'(x)}}+1}  
 ={} \frac{\sqrt{\frac{P'(x)}{Q'(x)}} -1}{\sqrt{\frac{P'(x)}{Q'(x)}}+1 } 
 ={} \frac{\sqrt{P'(x)}- \sqrt{Q'(x)}}{\sqrt{P'(x)} + \sqrt{Q'(x)}}  
 ={} \frac{\left(\sqrt{P'(x)}- \sqrt{Q'(x)}\right)^2}{P'(x)-Q'(x)}.
 \end{align*}
 We can therefore simplify the expression in \eqref{eq:swLLR-advantage} to obtain: 
\[
2\cdot \advg(\sCLLR) =  \left(\frac{e^{\eps/2}-1}{e^{\eps/2}+1} + \frac{e^{\eps'/2}-1}{e^{\eps'/2}+1}\right)\cdot \tau + H^2(P',Q')=\Theta(\eps\tau)+H^2(P',Q'),
\]
 as desired.
\end{proof}

\section{Proof of Lemma \ref{lem:wasserstein->dp}}
\label{sec:wasserstein->dp}

The proof of Lemma~\ref{lem:wasserstein->dp} relies on the following dual formulation of Wasserstein distance:
\begin{lem} \label{lem:wassersteindual}
For any metric $d \from \cX \times \cX \to \R_{\geq 0}$, $W_{d}(P,Q) \geq \alpha$ if and only if there exists a function $f\from \cX \to \R_{\geq 0}$ such that
\begin{enumerate}
\item $f$ is $1$-Lipschitz with respect to $d$, and
\item $\ex{}{f(P)} - \ex{}{f(Q)} \geq \alpha$, and
\item $\max_{X \in \cX} f(X) \leq 2 \cdot \max_{X,Y \in \cX} d(X,Y)$.
\end{enumerate}
\end{lem}

\noindent Using Lemma~\ref{lem:wassersteindual}, there exists a $1$-Lipschitz function $f \from \cX \to [0,2]$ 
such that $\ex{}{f(P)} - \ex{}{f(Q)} \geq W_{d_{\eps}}(P,Q)$.  If we define the algorithm
\begin{equation*}
M(X) = 
\begin{cases}
P & \textrm{with probability $\frac{1+ f(X)}{4}$}\\
Q & \textrm{with probability $\frac{3-f(X)}{4}$}
\end{cases}
\end{equation*}
then
\begin{enumerate}
\item $M$ satisfies $\eps$-DP, and
\item $\left|\pr{X \sim P}{M(X)=P}-\pr{X \sim Q}{M(X)=P}\right| = \frac{1}{4} W_{d_{\eps}}(P,Q)$.
\end{enumerate}
The second claim is immediate from the definition of $M$.  To see that $M$ is $\eps$-DP, note that any function that is $1$-Lipschitz with respect to $d_\eps$ is $\eps$-Lipschitz with respect to 
$d_H$. For any pair of neighbours $X$ and $Y$, we have
\begin{equation*}
\frac{\pr{M}{M(X) = P}}{\pr{M}{M(Y)=P}} \leq \frac{\frac{1}{4} + \frac{1}{4}f(X)}{\frac{1}{4} + \frac{1}{4}f(Y)} \leq \frac{\frac{1}{4} + \frac{\eps}{4}}{\frac{1}{4}} = 1+\eps \leq e^{\eps}
\end{equation*}
and the same holds for $\pr{M}{M(X) = Q}$.


\end{document}